\documentclass[12pt]{article}

\usepackage{amsmath,amssymb,amsthm}
\usepackage{geometry}
\usepackage{hyperref}
\usepackage[round,authoryear]{natbib}
\usepackage{booktabs}
\usepackage{array}
\usepackage{graphicx}
\usepackage{enumitem}
\usepackage{xcolor}
\usepackage{setspace}
\usepackage{titlesec}
\usepackage{abstract}

\hypersetup{
  colorlinks=true,
  linkcolor=black,
  citecolor=black,
  urlcolor=black
}

\theoremstyle{plain}
\newtheorem{theorem}{Theorem}
\newtheorem{proposition}{Proposition}
\newtheorem{lemma}{Lemma}
\newtheorem{corollary}{Corollary}

\theoremstyle{definition}
\newtheorem{definition}{Definition}
\newtheorem{assumption}{Assumption}

\theoremstyle{remark}
\newtheorem{remark}{Remark}

\newcommand{\E}{\mathbb{E}}
\newcommand{\R}{\mathbb{R}}
\newcommand{\Prob}{\mathbb{P}}
\newcommand{\thetastar}{\theta^{*}}
\newcommand{\shat}{\hat{s}}
\newcommand{\sstar}{s^{*}}
\newcommand{\HD}{H_{C,\mathrm{exp}}^{D}}
\newcommand{\HE}{H_{C,\mathrm{exp}}^{E}}
\newcommand{\DD}{\Delta_{D}}

\title{\textbf{A Coordination Theory of NHS Whistleblowing Failure}}

\author{
  Ari Ercole\\
  \small University of Cambridge, UK\\
  \small \texttt{ae105@cam.ac.uk}
}

\date{}

\begin{document}

\maketitle

\begin{abstract}
\noindent
Persistent whistleblowing failure in hierarchical healthcare organisations is typically attributed to insufficient legal protection for reporters or inadequate managerial incentives to investigate. This diagnosis is argued to be structurally incomplete. A repeated three-player game among a whistleblower, Trust management, and a colleague group is solved via global games techniques to obtain a unique equilibrium, identifying three mechanisms existing policy neither recognises nor addresses. \textit{Collective implication}: corroborating testimony exposes the witness's own conduct to scrutiny, deterring support for reports colleagues privately believe genuine and aligning management and colleagues against validation without coordination; employment protection and legal liability are both shown to leave this channel untouched. The model demonstrates that employment protection can increase reporting while leaving validation rates unchanged. \textit{Active elimination}: colleagues can construct a documented counter-narrative that manipulates management's belief about report credibility. Acting independently, a sufficiently credible exposure environment renders this strictly dominated, a discrete deterrence result with no counterpart for passive silence; acting as an organised coalition, colleagues instead trade that uncertainty for certainty of success, at a concealment risk independent action never bears, so the instrument that deters independent elimination does not deter collusion, which is undermined instead by destabilising it from within and, once suspected, must be investigated, confirmed, and punished promptly: delay is shown to compound the underlying harm, permanently degrade the institutional record, and leave the coalition's incentive to reoffend untouched. A \textit{turnover ratchet}: short management tenure means successive assessors inherit dispositions without the reasoning behind them, so the recorded prior degrades geometrically and only through succession. Three interventions follow, targeted respectively at self-incrimination, independent elimination, and collusion: colleague amnesty, adapted from cartel leniency programmes; mandatory named-trust outcome publication, which repairs the institutional record and reproduces the effect of a public inquiry at administrative cost; and mandatory, prompt investigation of suspected coalitions, prioritised ahead of resolving the underlying report, which repairs the record faster still and forgoes none of its value to delay.

\medskip
\noindent\textbf{Keywords:} whistleblowing, principal--agent, global games, coordination failure, collective implication, active elimination, NHS, healthcare governance

\medskip
\noindent\textbf{JEL codes:} C72, D82, D83, I18, K42
\end{abstract}

\newpage
\tableofcontents
\newpage

\section{Introduction}

NHS whistleblowing failure is one of the most investigated and least resolved problems in health policy. Roughly once a decade for the last thirty years, a public inquiry has found the same pattern: declined warnings, punished reporters, institutional denial. The Francis Inquiry into Mid Staffordshire NHS Foundation Trust catalogued years of declined warnings from nurses, junior doctors, and managers \citep{francis2013}; the Ockenden Review found near-identical patterns in maternity services at Shrewsbury and Telford \citep{ockenden2022}; the Messenger Review identified a systemic culture of speaking-up failure across multiple trusts \citep{messenger2022}. Each inquiry reads, in its particulars, like an update of the one before it. The legislative response, most notably the Public Interest Disclosure Act 1998 and the Freedom to Speak Up programme, has not produced stable improvement: whistleblowers continue to be sanctioned at rates that deter reporting \citep{speak2023}, and the cases reaching public attention now look uncomfortably like those from a decade ago. Thirty years is long enough for a great deal of well-intentioned reform; if the standard account of why whistleblowing fails were correct, that account should by now show visible improvement. It has not.

The prevailing explanation treats this as a two-party problem: whistleblowers need stronger protection, management needs stronger incentives to investigate. This paper argues that the party actually deciding most cases, the colleagues who must decide whether to back a report, ignore it, or work against it, is largely absent from that account, and that this omission is why thirty years of reform has not moved the pattern.

Recent cases make the omitted mechanism concrete. At the Countess of Chester Hospital, the consultants who raised the alarm about elevated neonatal mortality were directed by executives into mediation with the nurse they had reported and made to apologise to her in writing \citep{thirlwall2025}. At the same trust, a tribunal later found the chairman and three directors had conspired, in a covert operation styled Project Countess, to remove a chief executive who had made protected disclosures about board conduct, offering a settlement conditional on their withdrawal and deleting messages in an attempt to escape accountability \citep{gilby2025}. At University Hospitals Birmingham, an independent review following a junior doctor's suicide, which her note attributed to the culture of the organisation, described a climate of fear in which colleagues as well as managers labelled those who raised concerns as difficult or clinically weak \citep{bewick2023}. Retaliatory referral of reporting clinicians to the GMC or NMC is a recurrent enough pattern to have been the subject of a review commissioned by the regulator itself \citep{hooper2015}, and the decade of litigation needed to settle the threshold legal question in \citet{day2017} shows a further margin: the cost of reporting is raised not only by dismissal but by the resources an organisation can deploy against the reporter afterwards.

Three findings follow from taking the colleague layer seriously, developed formally below. Corroborating a report requires giving evidence, and giving evidence exposes the witness's own conduct to scrutiny in a way silence does not; when the underlying problem is systemic, this cost falls on anyone who testifies, so colleagues who privately believe a report end up rationally staying silent, and management and colleagues are aligned against validation without coordinating. This is also why two standard remedies miss: protecting the whistleblower does not change whether colleagues back them, and stronger legal liability for management only bites when a decline is likely to come to light, which is precisely when it is least likely to. Second, colleagues are shown at times to go further, actively building a documented case against the reporter; because this leaves an attributable trail, unlike silence, it can be made strictly irrational at any level of confidence once the exposure environment is credible enough, a discrete result silence never admits. Third, organisations are shown to decay even though every manager involved behaves rationally: short tenure means each report is judged by a successor who inherits a file recording outcomes rather than process, and each ambiguous ``unsubstantiated'' record rationally degrades the next assessor's belief in the organisation's reporting culture, a mechanism that requires turnover and vanishes for anyone who remembers their own decisions.

These findings motivate two policy instruments developed in Section~\ref{sec:policy}: colleague amnesty, adapted from cartel leniency programmes in competition law \citep{motta2003,spagnolo2004}, and mandatory named-trust outcome publication, which repairs the degraded institutional record at its root.

\section{Model}

\subsection{Players, timing, and information}

Three types of player are specified. The \textit{whistleblower} $W$ is a single agent by whom a violation is observed and a decision is made whether to report it. \textit{Trust management} $T$ is the organisational principal by whom reports are received and a decision is made whether to validate or decline them. The \textit{colleague group} $\mathcal{C} = \{c_1, \ldots, c_n\}$ is a set of $n$ agents, with $n$ large, by whom the report is observed and an action chosen from $\{\text{Support}, \text{Defect}, \text{Eliminate}\}$.

The game has infinite horizon, indexed by $t \in \{1, 2, \ldots\}$. In each period, the stage game proceeds as follows.

\begin{enumerate}[label=\arabic*.]
  \item A violation of severity $\theta \sim F(\theta)$ occurs, where $F$ has full support on $\R_{+}$.
  \item $\theta$ is observed by $W$, whose own type $\omega \in \{G, V\}$ (genuine or vexatious) is privately known. $W$ chooses $a_W \in \{\text{Report}, \text{Silent}\}$.
  \item If a report is made, each colleague $c_i$ receives a private signal $\theta_i = \theta + \varepsilon_i$, where $\varepsilon_i \sim U[-\varepsilon, \varepsilon]$ independently, and simultaneously chooses an action $a_i$ from $\{\text{Support}, \text{Defect}, \text{Eliminate}\}$.
  \item The report, the aggregate support rate $s = \frac{1}{n}\sum_i \mathbf{1}[a_i = \text{Support}]$, and the persuasion signal $\sigma$ generated by eliminating colleagues (Section~\ref{sec:elimination}) are observed by $T$. A posterior on $W$'s type is formed and $a_T \in \{\text{Validate}, \text{Decline}\}$ is chosen.
\end{enumerate}

$\theta$ is not directly observed by $T$ before a decision is made. The prior $p = \Prob(\omega = G)$ is known to $T$, and $s$ and $\sigma$ are observed in each period. Since $s$ is monotone in $\theta$ under the threshold strategies characterised below, updating on $s$ is simultaneously updating on the severity of the violation and on $W$'s type; the posterior $p(\shat, \sigma)$ defined below encodes both channels.

Two horizons of discounting are distinguished. Exposure of a declined report, should it occur, is realised after an expected lag of $\ell$ periods, with per-period probability $\lambda \in (0,1)$; the discount factors of management and colleagues are $\delta_T, \delta_C \in (0,1)$ respectively, so exposure liabilities are discounted by $\delta_T^{\ell}$ and $\delta_C^{\ell}$.

\subsection{Payoffs}

\subsubsection*{Whistleblower}

$W$'s stage payoff depends on $a_W$ and $a_T$:
\begin{align}
  u_W = \begin{cases}
    B_W - c_W & \text{if } a_W = \text{Report},\; a_T = \text{Validate} \\
    -P_W - c_W & \text{if } a_W = \text{Report},\; a_T = \text{Decline} \\
    0 & \text{if } a_W = \text{Silent}
  \end{cases}
\end{align}
where $B_W > 0$ is the benefit from a validated report, $c_W > 0$ is the personal cost of reporting, and $P_W \geq 0$ is the punishment inflicted when decline occurs.

\subsubsection*{Colleagues: Support and Defect}

For the Support and Defect actions, colleague $c_i$'s stage payoff is:
\begin{align}
  u_{c_i} = \begin{cases}
    B_C\,\theta - k - \phi H_C & \text{if } a_i = \text{Support},\; a_T = \text{Validate} \\
    -k - R_C & \text{if } a_i = \text{Support},\; a_T = \text{Decline} \\
    0 & \text{if } a_i = \text{Defect},\; a_T = \text{Validate} \\
    b_C - \delta_C^{\ell} \lambda \HD & \text{if } a_i = \text{Defect},\; a_T = \text{Decline}
  \end{cases}
  \label{eq:coll_payoffs}
\end{align}

The interpretation of each element is as follows. $B_C\,\theta > 0$ is the benefit from having supported a vindicated whistleblower, scaling with the severity of the vindicated violation: the professional and moral credit from corroborating a grave concern exceeds that from corroborating a minor one. $k > 0$ is the direct cost of supporting (time, friction, visibility). $\phi \in [0,1]$ is the degree of \textit{collective implication} and $H_C \geq 0$ the associated harm; the product $\phi H_C$ is a \textit{testimony self-incrimination cost}, borne only by supporters, because corroborating the report requires giving evidence, and giving evidence places the witness's own conduct and proximity to the violation under investigative scrutiny in a way that silence does not. $R_C \geq 0$ is the reputational cost of having visibly supported a whistleblower who was then declined. $b_C \geq 0$ is the reward from defection when decline follows (management approval, career advancement). $\HD \geq 0$ is the harm sustained by a colleague identified, in a subsequent discovery of the missed concern, as having stood silently by; it is attached to the Defect--Decline cell because such a discovery presupposes that a concern was in fact missed, and it is discounted by $\delta_C^{\ell}\lambda$. A colleague who supported and was overruled is vindicated rather than harmed by later exposure, so no exposure liability is attached to the Support--Decline cell.

It is convenient to define the net gain from defection conditional on decline:
\begin{align}
  \DD \;\equiv\; k + R_C + b_C - \delta_C^{\ell}\lambda \HD .
  \label{eq:DeltaD}
\end{align}

\subsubsection*{Colleagues: Eliminate}

The Eliminate action is qualitatively distinct from Defect. A documented signal $\sigma_i$ directed at management is produced by a colleague who eliminates: complaints, incident reports, negative testimony, contributions to performance management processes concerning $W$. The aggregate signal is $\sigma = \frac{1}{n_E}\sum_{i: a_i = \mathrm{Elim}} \sigma_i$, where $n_E$ is the number of eliminating colleagues. $\sigma$ is strategically constructed to resemble evidence that $W$ is vexatious.

The payoff to Eliminate is:
\begin{align}
  u_{c_i}(\mathrm{Elim}) = \begin{cases}
    b_C + \beta_i - \delta_C^{\ell} \lambda \HE & \text{if } a_T = \text{Decline} \\
    -\mu_i & \text{if } a_T = \text{Validate}
  \end{cases}
  \label{eq:elim_payoffs}
\end{align}
where $\beta_i \geq 0$ is the private benefit from eliminating over and above ordinary defection (deflected scrutiny, heightened management approval), $\HE$ is the harm sustained upon later discovery of the missed concern by a colleague whose participation is documented, and $\mu_i \geq 0$ is the immediate harm sustained when a validated investigation exposes the fabricated counter-narrative directly. The exposure liability $\delta_C^{\ell}\lambda\HE$ is attached to the Decline branch because that is the branch in which there is a missed concern to discover; in the Validate branch the fabrication is exposed by the investigation itself, immediately and with certainty, which is what $\mu_i$ records.

\subsubsection*{Trust management}

Validation costs $C_T > 0$, yields regulatory and reputational benefit $p(\shat,\sigma)\, R_T$ where $p(\shat,\sigma)$ is $T$'s posterior that the report is genuine, and halts the ongoing harm after the current period. Decline avoids $C_T$ but allows the harm, costing $D_T \geq 0$ per period, to continue indefinitely, and creates an exposure liability. The intertemporal payoffs are:
\begin{align}
  U_T(\text{Validate}) &= -\,C_T + p(\shat, \sigma)\, R_T - D_T \label{eq:uval} \\
  U_T(\text{Decline}) &= -\,\frac{D_T}{1-\delta_T} \;-\; \delta_T^{\ell} \lambda \left(L_T + \Phi\, \Omega_T\right) \label{eq:usup}
\end{align}
where $L_T \geq 0$ is the legal and regulatory liability upon exposure of the decline, $\Omega_T \geq 0$ is the reputational catastrophe cost if a systemic missed concern comes to light, and $\Phi \in [0,1]$ is the \textit{systemicity} of the underlying violation. $\Phi$ is a property of the violation itself and scales the catastrophe term; it is conceptually distinct from the colleague-side implication $\phi \leq \Phi$, which measures the share of investigative scrutiny borne personally by a testifying colleague and which, unlike $\Phi$, can be altered by investigation design (Section~\ref{sec:policy}).

\subsubsection*{Assumptions}

\begin{assumption}\label{ass:params}
(i) $B_W, c_W, B_C, k, C_T, R_T > 0$; $\delta_T, \delta_C, \lambda \in (0,1)$; $\phi, \Phi \in [0,1]$ with $\phi \le \Phi$; $\varepsilon > 0$; $\HE \geq \HD \geq 0$; $\beta_i, \mu_i \geq 0$.
(ii) $\DD > 0$: net of discounted exposure risk, defection is profitable when decline follows.
(iii) $p\,R_T < C_T$: investigation of every report is not individually rational for $T$ at the prior.
(iv) MLRP: the likelihoods $f_G(\shat), f_V(\shat)$ of the observed support rate under genuine and vexatious reports satisfy $f_G/f_V$ increasing in $\shat$, and the likelihoods $g_G(\sigma), g_V(\sigma)$ of the persuasion signal satisfy $g_G/g_V$ decreasing in $\sigma$.
\end{assumption}

\begin{assumption}[Dominance regions]\label{ass:dominance}
There exist severities $0 < \theta_L < \theta_H$ such that: (i) for $\theta \geq \theta_H$ the violation is externally verifiable and is validated by $T$ regardless of $s$ and $\sigma$, and $B_C\,\theta_H > k + \phi H_C$; (ii) for $\theta \leq \theta_L$ the report is dismissed by $T$ regardless of $s$.
\end{assumption}

Assumption~\ref{ass:dominance} is the standard device by which dominance regions are guaranteed \citep{morris1998}: catastrophic violations cannot credibly be declined, and trivial ones are never investigated. Under it, Support is strictly dominant for signals above $\theta_H + \varepsilon$ (validation is certain and the vindication benefit exceeds the costs of supporting), and Defect is strictly dominant over Support for signals below $\theta_L - \varepsilon$ (decline is certain and, by Assumption~\ref{ass:params}(ii), defection then strictly outperforms support).

\section{Analysis: Support and Defect subgame}

The equilibrium of the Support/Defect subgame is first characterised; the extension to the Eliminate action is carried out in Section~\ref{sec:elimination}. Throughout this section, $\sigma = 0$.

\subsection{Trust's investigation condition}

From \eqref{eq:uval} and \eqref{eq:usup}, Decline is preferred by $T$ iff $U_T(\text{Decline}) > U_T(\text{Validate})$, which rearranges to:
\begin{align}
  C_T \;>\; p(\shat)\, R_T \;+\; \delta_T^{\ell} \lambda \left(L_T + \Phi\, \Omega_T\right) \;+\; \frac{\delta_T}{1-\delta_T}\, D_T .
  \label{eq:suppress_cond}
\end{align}
The final term is the present value of the \textit{future} harm that validation would prevent: $D_T/(1-\delta_T) - D_T = \delta_T D_T/(1-\delta_T)$.

\begin{proposition}[Trust's investigation condition]\label{prop:suppress}
  Decline occurs iff \eqref{eq:suppress_cond} holds. The right-hand side is increasing in $p$, $\lambda$, $L_T$, $\Phi$, $\Omega_T$, $D_T$, and $\delta_T$, and decreasing in $\ell$.
\end{proposition}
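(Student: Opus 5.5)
The plan is to derive the rearrangement directly from the payoff definitions, then read off each comparative static from the form of the right-hand side of \eqref{eq:suppress_cond}. Throughout, $\sigma=0$ as in this section.

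\emph{Step 1 (the equivalence).} I will write $U_T(\text{Decline})>U_T(\text{Validate})$ using \eqref{eq:uval} and \eqref{eq:usup}. Adding $C_T+D_T/(1-\delta_T)+\delta_T^{\ell}\lambda(L_T+\Phi\Omega_T)$ to both sides and using $D_T/(1-\delta_T)-D_T=\delta_T D_T/(1-\delta_T)$ yields exactly \eqref{eq:suppress_cond}. Each step adds the same finite quantity to both sides, so it is reversible and the inequality is equivalent to strict preference for Decline. I will state the tie-breaking convention explicitly: at equality $T$ validates. With that convention, ``Decline occurs iff \eqref{eq:suppress_cond} holds'' is exact.

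\emph{Step 2 (dependence on the prior $p$).} This is the only comparative static that is not immediate, because $p$ enters only through the posterior. By Bayes' rule, with $\sigma=0$,
\begin{align*}
p(\shat)=\frac{p f_G(\shat)}{p f_G(\shat)+(1-p) f_V(\shat)} .
\end{align*}
Fix $\shat$ with $f_G(\shat),f_V(\shat)>0$. Differentiating in $p$ gives
\begin{align*}
\frac{\partial p(\shat)}{\partial p}=\frac{f_G(\shat) f_V(\shat)}{\bigl(p f_G(\shat)+(1-p) f_V(\shat)\bigr)^2} ,
\end{align*}
which is strictly positive. Since $R_T>0$, the term $p(\shat)R_T$ is strictly increasing in $p$. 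MLRP (Assumption~\ref{ass:params}(iv)) is not needed here; it governs monotonicity in $\shat$, not in $p$.

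\emph{Step 3 (the remaining parameters).} The other arguments enter the right-hand side additively and do not affect $p(\shat)$.
\begin{itemize}
\item The term $\delta_T^{\ell}\lambda(L_T+\Phi\Omega_T)$ is increasing in $\lambda$, $L_T$, $\Phi$ and $\Omega_T$, because all of their coefficients are nonnegative.
\item The term $\delta_T D_T/(1-\delta_T)$ is strictly increasing in $D_T$, because $\delta_T\in(0,1)$.
\item $\delta_T$ appears in two places, $\delta_T^{\ell}$ and $\delta_T/(1-\delta_T)$. Both are increasing on $(0,1)$: the first has derivative $\ell\delta_T^{\ell-1}\ge 0$ and the second has derivative $1/(1-\delta_T)^2>0$. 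Both multiply nonnegative quantities, so the sum is increasing in $\delta_T$.
\item For $\ell$, I will use $\partial_\ell \delta_T^{\ell}=\delta_T^{\ell}\ln\delta_T<0$. The right-hand side is therefore decreasing in $\ell$ whenever $\lambda(L_T+\Phi\Omega_T)>0$, and constant otherwise.
\end{itemize}

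\emph{Main obstacle.} The main difficulty is bookkeeping rather than substance. I need to state precisely where ``increasing'' is strict and where it is only weak, since several coefficients can be zero: $L_T$, $\Omega_T$ and $\Phi$ may all vanish. I will record strictness conditions alongside each claim; for example, the dependence on $\Phi$ is strict iff $\Omega_T>0$. I also need to make explicit that the monotonicity in $p$ holds pointwise in $\shat$ and is taken at fixed likelihoods $f_G$ and $f_V$.
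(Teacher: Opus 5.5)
Your proposal is correct and takes the paper's own route: rearrange $U_T(\text{Decline})>U_T(\text{Validate})$ into the stated inequality, then read off each comparative static by direct differentiation, using the same derivatives $\ell\,\delta_T^{\ell-1}\lambda(L_T+\Phi\Omega_T)+D_T/(1-\delta_T)^2$ in $\delta_T$ and $\delta_T^{\ell}\ln(\delta_T)\,\lambda(L_T+\Phi\Omega_T)$ in $\ell$. The extra care you add refines the paper's argument rather than replacing it: the paper calls monotonicity in $p$ immediate where you verify $\partial p(\hat{s})/\partial p = f_G f_V/(p f_G+(1-p)f_V)^2>0$, and the paper's strict signs in $\delta_T$ and $\ell$ tacitly need some of $D_T$, $L_T$, $\Phi\Omega_T$ to be positive, which is the weak-versus-strict bookkeeping you flag.
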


\begin{proof}
  The inequality is a rearrangement of $U_T(\text{Decline}) > U_T(\text{Validate})$. Monotonicity in $p, \lambda, L_T, \Phi, \Omega_T, D_T$ is immediate. In $\delta_T$: $\partial \mathrm{RHS}/\partial \delta_T = \ell\, \delta_T^{\ell-1}\lambda(L_T + \Phi\Omega_T) + D_T/(1-\delta_T)^2 > 0$. In $\ell$: $\partial \mathrm{RHS}/\partial \ell = \delta_T^{\ell}\ln(\delta_T)\,\lambda(L_T + \Phi\Omega_T) < 0$ since $\ln \delta_T < 0$.
\end{proof}

\begin{remark}\label{rem:tenure}
  The management tenure effect operates through two channels simultaneously, consistent with the organisational economics literature on authority and career concerns \citep{holmstrom1999}. As $\delta_T \to 0$ (short tenure), both the discounted exposure liability $\delta_T^{\ell}\lambda(L_T + \Phi\Omega_T)$ and the future-harm term $\delta_T D_T/(1-\delta_T)$ vanish, and the right-hand side collapses to $p\,R_T$, which is below $C_T$ by Assumption~\ref{ass:params}(iii). Decline is then strictly preferred for any report. As $\delta_T \to 1$ (long tenure), the future-harm term diverges and validation is strictly preferred. Management turnover is therefore not merely correlated with decline but is sufficient for it in the limit.
\end{remark}

\begin{remark}\label{rem:LT_gated}
  $L_T$ enters \eqref{eq:suppress_cond} only through the product $\delta_T^{\ell}\lambda L_T$. The marginal effect of statutory liability on $T$'s decision is therefore proportional to the exposure probability: where enforcement visibility is low, no feasible increase in $L_T$ materially changes $T$'s calculus. This gating is exploited repeatedly below.
\end{remark}

\subsection{$T$'s validation threshold in the support rate}

\begin{lemma}[Validation threshold]\label{lemma:val_prob}
  Under Assumption~\ref{ass:params}(iv) there exists a unique threshold $\sstar \in [0,1]$ such that validation occurs iff $\shat \geq \sstar$, defined implicitly by the equation
  \begin{align}
    p(\sstar)\, R_T + \delta_T^{\ell}\lambda(L_T + \Phi\Omega_T) + \frac{\delta_T}{1-\delta_T}D_T = C_T,
    \label{eq:sstar_def}
  \end{align}
  where $p(\shat) = \dfrac{p\, f_G(\shat)}{p\, f_G(\shat) + (1-p)\, f_V(\shat)}$, with the conventions $\sstar = 0$ (validation for every $\shat$) if the left-hand side of \eqref{eq:sstar_def} exceeds $C_T$ at $\shat = 0$, and decline for every $\shat \in [0,1]$, written $\sstar > 1$, if it is below $C_T$ at $\shat = 1$. $\sstar$ is decreasing in $\lambda$, $L_T$, $\Phi$, $\Omega_T$, $D_T$, $\delta_T$, and $p$.
\end{lemma}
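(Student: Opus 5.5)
The argument reduces $T$'s decision, via Proposition~\ref{prop:suppress} with $\sigma=0$, to comparing a monotone function of $\shat$ with $C_T$. Write
\[
G(\shat) \equiv p(\shat)\,R_T + K, \qquad K \equiv \delta_T^{\ell}\lambda(L_T+\Phi\Omega_T) + \frac{\delta_T}{1-\delta_T}D_T ,
\]
where $K$ does not depend on $\shat$. By \eqref{eq:suppress_cond}, validation occurs iff $G(\shat)\ge C_T$; ties are resolved in favour of validation, which matches the weak inequality in the statement. The first step is to show that $p(\shat)$ is increasing. Writing
\[
p(\shat) = \Bigl[1 + \tfrac{1-p}{p}\,\bigl(f_G(\shat)/f_V(\shat)\bigr)^{-1}\Bigr]^{-1},
\]
MLRP in Assumption~\ref{ass:params}(iv) makes $f_G/f_V$ increasing, so $p(\shat)$ is increasing, and strictly so wherever the likelihood ratio is strictly increasing. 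Since $R_T>0$, $G$ is increasing as well.

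The second step builds the threshold. Set $\sstar \equiv \inf\{\shat\in[0,1] : G(\shat)\ge C_T\}$. Then handle the three cases separately.
\begin{enumerate}[label=(\alph*)]
  \item If $G(0)\ge C_T$, every $\shat$ qualifies and $\sstar=0$.
  \item If $G(1)<C_T$, no $\shat$ qualifies, and we write $\sstar>1$.
  \item Otherwise $\sstar\in(0,1]$. Monotonicity makes the validation set an upper interval, and continuity of $f_G,f_V$ gives $G(\sstar)=C_T$, i.e.\ \eqref{eq:sstar_def}.
\end{enumerate}
Uniqueness of the root needs strict monotonicity of $G$, hence of $f_G/f_V$. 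If MLRP is only weak, the validation set is still an upper interval and $\sstar$, defined as the infimum, is still unique as the threshold. I flag here that continuity of the likelihoods is implicitly assumed; I would state it, or else define $\sstar$ as the infimum and read \eqref{eq:sstar_def} as holding up to a jump.

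The third step is comparative statics, via the implicit function theorem on \eqref{eq:sstar_def}. Let $x$ be any of $\lambda, L_T, \Phi, \Omega_T, D_T, \delta_T$. Then
\[
\frac{\partial \sstar}{\partial x} = -\,\frac{\partial G/\partial x}{R_T\, p'(\sstar)} .
\]
The denominator is positive by MLRP. The numerator is positive by the derivatives already computed in the proof of Proposition~\ref{prop:suppress}, including $\partial/\partial\delta_T>0$. Hence $\sstar$ decreases in each of these parameters.

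For the prior $p$, fix the likelihood ratio $r=f_G/f_V$ and note that $p\,r/(p\,r+1-p)$ is strictly increasing in $p$. So $\partial G/\partial p>0$ and again $\sstar$ falls.

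A differentiability-free version of the same argument works too. Each parameter shifts $G$ pointwise upward, which shrinks the set $\{G<C_T\}$, which lowers its infimum. I would present this version to avoid assuming that $p'$ exists. In the corner cases the threshold stays weakly decreasing, since raising a parameter can only move it from $>1$ into $[0,1]$ or down to $0$.

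The main obstacle is the regularity needed for the equation to characterise $\sstar$ exactly: strict MLRP and continuity of $f_G,f_V$ in $\shat$, as in case (c). The set-inclusion argument is robust without these, so I would lead with it and treat \eqref{eq:sstar_def} as the interior characterisation.
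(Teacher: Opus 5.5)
Your proposal is correct and follows essentially the paper's argument: validation occurs iff the left-hand side of \eqref{eq:sstar_def}, which is monotone in $\shat$, is at least the constant $C_T$, and each listed parameter shifts that left-hand side up pointwise, which lowers the crossing. Your infimum construction, and your point that continuity and strict monotonicity of $f_G/f_V$ are needed, tighten the argument: the paper simply asserts that MLRP delivers a continuous, strictly increasing $p(\shat)$, although Assumption~\ref{ass:params}(iv) as stated gives neither property.
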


\begin{proof}
  By MLRP, $p(\shat)$ is continuous and strictly increasing in $\shat$, so the left-hand side of \eqref{eq:sstar_def} is continuous and strictly increasing in $\shat$ while $C_T$ is constant; a unique crossing exists whenever the boundary conventions do not bind. Each listed parameter raises the left-hand side pointwise, lowering the crossing.
\end{proof}

\subsection{Colleague coordination: global games solution}

Attention is restricted to threshold strategies: Support is played iff $\theta_i \geq \thetastar$ for a common cutoff $\thetastar$. Given such strategies, the realised support rate at true severity $\theta$ is
\begin{align}
  \shat(\theta, \thetastar) = \frac{\theta + \varepsilon - \thetastar}{2\varepsilon}
\end{align}
for $\thetastar \in [\theta - \varepsilon, \theta + \varepsilon]$, and $0$ or $1$ outside this range.

From \eqref{eq:coll_payoffs}, Support is preferred to Defect by a colleague with signal $\theta_i$ who assigns probability $\pi$ to validation iff
\begin{align}
  \pi\left(B_C\,\theta_i - k - \phi H_C\right) \;+\; (1-\pi)\left(-k - R_C\right) \;>\; (1-\pi)\left(b_C - \delta_C^{\ell}\lambda\HD\right),
\end{align}
which, using \eqref{eq:DeltaD}, rearranges to
\begin{align}
  \pi\left(B_C\,\theta_i - \phi H_C + R_C + b_C - \delta_C^{\ell}\lambda\HD\right) \;>\; \DD .
  \label{eq:coll_ineq}
\end{align}
The left-hand side is strictly increasing in $\theta_i$ for $\pi > 0$, which is the state monotonicity the global games argument requires.

\begin{theorem}[Unique threshold equilibrium]\label{thm:unique}
  Under Assumptions~\ref{ass:params} and \ref{ass:dominance}, and for $\varepsilon$ sufficiently small, the Support/Defect subgame has an essentially unique equilibrium, in threshold strategies, with cutoff
  \begin{align}
    \thetastar \;=\; \frac{1}{B_C}\left[\,k + \phi H_C \;+\; \frac{\sstar}{1-\sstar}\,\DD \,\right]
    \label{eq:theta_star}
  \end{align}
  for interior $\sstar$, and realised support rate $\shat(\theta,\thetastar) = (\theta + \varepsilon - \thetastar)/2\varepsilon$.
\end{theorem}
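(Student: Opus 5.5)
The proof has two parts: a Laplacian-belief computation that pins down the cutoff, and the standard iterated-dominance argument that delivers uniqueness.

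\textbf{Cutoff.} Suppose all colleagues use a common cutoff $\thetastar$, and consider the marginal colleague with signal $\theta_i=\thetastar$. Because the prior $F$ has full support and the noise is uniform, as $\varepsilon\to0$ this colleague's posterior on $\theta$ is approximately uniform on $[\thetastar-\varepsilon,\thetastar+\varepsilon]$. Two consequences follow.
\begin{itemize}
  \item Her posterior on the realised support rate $\shat(\theta,\thetastar)=(\theta+\varepsilon-\thetastar)/2\varepsilon$ is then uniform on $[0,1]$. This is the Laplacian belief.
  \item Lemma~\ref{lemma:val_prob} says validation occurs iff $\shat\ge\sstar$. She therefore assigns validation probability $\pi=1-\sstar$ for interior $\sstar$.
\end{itemize}
Substituting $\pi=1-\sstar$ and $\theta_i=\thetastar$ into \eqref{eq:coll_ineq} with equality, and using \eqref{eq:DeltaD}, gives
\[
(1-\sstar)\bigl(B_C\thetastar-\phi H_C-k+\DD\bigr)=\DD .
\]
Here I use that $R_C+b_C-\delta_C^{\ell}\lambda\HD=\DD-k$. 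Solving,
\[
B_C\thetastar-k-\phi H_C=\DD\Bigl(\frac{1}{1-\sstar}-1\Bigr)=\frac{\sstar}{1-\sstar}\DD,
\]
which is \eqref{eq:theta_star}. So any symmetric threshold equilibrium must have this cutoff, up to terms vanishing with $\varepsilon$. This is the precise sense of "essentially unique".

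\textbf{Uniqueness.} I would rule out non-threshold equilibria by the iterated elimination argument of \citet{morris1998}.
\begin{enumerate}
  \item \emph{Dominance regions.} By Assumption~\ref{ass:dominance} and the discussion after it, Support is dominant for $\theta_i>\theta_H+\varepsilon$ and Defect is dominant for $\theta_i<\theta_L-\varepsilon$.
  \item \emph{Payoff to the cutoff strategy.} Let $\Delta(\theta_i,\hat\theta)$ be the Support-minus-Defect payoff of a colleague with signal $\theta_i$ when all others use cutoff $\hat\theta$. This equals the left side of \eqref{eq:coll_ineq} minus $\DD$, with $\pi(\theta_i,\hat\theta)=\Prob(\shat(\theta,\hat\theta)\ge\sstar\mid\theta_i)$.
  \item \emph{Monotonicity.} $\pi$ is increasing in $\theta_i$ and decreasing in $\hat\theta$. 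For $\pi>0$ the bracket in \eqref{eq:coll_ineq} is increasing in $\theta_i$. Hence $\Delta$ is increasing in $\theta_i$ and decreasing in $\hat\theta$, so best responses to cutoff strategies are cutoff strategies with a monotone best-response map. Strategic complementarity also holds more generally: more supporters raise $\shat$ and hence $\pi$.
  \item \emph{Iteration from below.} Define $\underline\theta_{m+1}$ as the cutoff solving $\Delta(\underline\theta_{m+1},\underline\theta_m)=0$. Start from "everyone Supports only above $\theta_H+\varepsilon$", where each step eliminates Defect above the new cutoff.
  \item \emph{Iteration from above.} Run the symmetric sequence $\overline\theta_m$ from the Defect dominance region.
  \item \emph{Convergence.} Both sequences are monotone and bounded, so they converge to solutions of $\Delta(\theta,\theta)=0$.
\end{enumerate}

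\textbf{Main obstacle: a unique solution of $\Delta(\theta,\theta)=0$.} In the uniform limit $\pi(\theta,\theta)=1-\sstar$ is constant in $\theta$, so $\Delta(\theta,\theta)$ is strictly increasing in $\theta$ through the $B_C\theta$ term. Its zero is therefore unique and equal to \eqref{eq:theta_star}. Strict monotonicity of $\Delta(\theta,\theta)$ then closes the argument: the two limits coincide, and every strategy surviving iterated elimination is the threshold strategy.

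For small $\varepsilon$ the posterior is only approximately uniform, so I need continuity and a uniform bound. Since $f$ is continuous with full support, the density of $\theta$ is nearly constant on windows of width $2\varepsilon$. This gives $|\pi(\theta,\theta)-(1-\sstar)|\to0$ uniformly on $[\theta_L-\varepsilon,\theta_H+\varepsilon]$, while $\partial_\theta\Delta(\theta,\theta)\ge(1-\sstar)B_C-o(1)>0$.

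One further point needs care. The threshold structure in $\shat$ from Lemma~\ref{lemma:val_prob} presumes that $T$'s MLRP-based posterior is computed with respect to equilibrium play. I would treat $\sstar$ as fixed by $T$'s best response, taking $f_G,f_V$ as given and consistent with threshold play, and then verify consistency.
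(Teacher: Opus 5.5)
Your proposal is correct and follows the paper's own route. The Laplacian belief at the cutoff gives $\pi = 1-\sstar$, indifference in \eqref{eq:coll_ineq} yields \eqref{eq:theta_star}, and uniqueness comes from iterated deletion inward from the dominance regions as in \citet{morris1998}; you make this more explicit than the paper by writing out the monotone sequences and the fixed-point condition $\Delta(\theta,\theta)=0$.

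One small tightening is needed. Uniform closeness of $\pi(\theta,\theta)$ to $1-\sstar$ does not by itself give $\partial_\theta\Delta(\theta,\theta)\ge(1-\sstar)B_C-o(1)$; that bound needs a $C^1$ prior density, although it is enough to note that every zero of $\Delta(\theta,\theta)$ lies within $o(1)$ of \eqref{eq:theta_star}. Also, the uniform bound should be stated on $[\theta_L+\varepsilon,\theta_H-\varepsilon]$, because near $\theta_L$ and $\theta_H$ the dominance regions of Assumption~\ref{ass:dominance} push $\pi(\theta,\theta)$ monotonically toward $0$ and $1$.
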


\begin{proof}
  \textit{Dominance regions.} By Assumption~\ref{ass:dominance}(i), for $\theta_i > \theta_H + \varepsilon$ validation is certain ($\pi = 1$) and \eqref{eq:coll_ineq} reduces to $B_C\theta_i > k + \phi H_C$, which holds; Support is strictly dominant. By Assumption~\ref{ass:dominance}(ii), for $\theta_i < \theta_L - \varepsilon$ decline is certain ($\pi = 0$) and \eqref{eq:coll_ineq} reduces to $0 > \DD$, which fails by Assumption~\ref{ass:params}(ii); Defect is strictly dominant.

  \textit{Strategic complementarity.} A lower cutoff used by others raises $\shat$ at every $\theta$, weakly raising the probability that $\shat \geq \sstar$ and hence $\pi$; the incentive to Support is increasing in others' propensity to Support.

  \textit{Laplacian belief at the cutoff.} With uniform noise and threshold play, the marginal colleague, whose signal equals the cutoff exactly, holds uniform beliefs over the fraction of others whose signals exceed the cutoff \citep{morris1998}: $\shat$ is believed to be distributed $U[0,1]$. Hence the probability assigned to validation at the cutoff is $\pi = \Prob(\shat \geq \sstar) = 1 - \sstar$.

  \textit{Indifference.} Equating the expected payoffs of Support and Defect at $\theta_i = \thetastar$ with $\pi = 1-\sstar$:
  \begin{align*}
    (1-\sstar)\big(B_C\thetastar - k - \phi H_C\big) + \sstar\big({-k}-R_C\big) &= \sstar\big(b_C - \delta_C^{\ell}\lambda\HD\big) \\
    (1-\sstar)\big(B_C\thetastar - k - \phi H_C\big) &= \sstar\big(k + R_C + b_C - \delta_C^{\ell}\lambda\HD\big) \\
    (1-\sstar)\,B_C\thetastar &= (1-\sstar)(k + \phi H_C) + \sstar\,\DD ,
  \end{align*}
  yielding \eqref{eq:theta_star}; the full algebra is set out in Appendix A. By iterated deletion of strictly dominated strategies proceeding inward from the dominance regions, this cutoff is the unique survivor for $\varepsilon$ small \citep{carlsson1993,morris1998}.
\end{proof}

\begin{corollary}[Comparative statics of the cutoff]\label{cor:theta_star_comparative}
  For interior $\sstar$:
  \begin{enumerate}[label=(\roman*)]
    \item $\thetastar$ is increasing in $\phi$, $H_C$, $k$, $R_C$, $b_C$, and $\sstar$, and decreasing in $B_C$.
    \item $\thetastar$ is decreasing in $\lambda$ through two reinforcing channels: directly, $\partial\thetastar/\partial\lambda\big|_{\sstar} = -\frac{1}{B_C}\frac{\sstar}{1-\sstar}\,\delta_C^{\ell}\HD < 0$; and indirectly, since $\sstar$ is decreasing in $\lambda$ (Lemma~\ref{lemma:val_prob}) and $\thetastar$ is increasing in $\sstar$.
    \item $\thetastar$ is decreasing in $\HD$ and increasing in $\ell$ (holding $\sstar$ fixed).
  \end{enumerate}
\end{corollary}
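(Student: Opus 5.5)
The corollary follows by differentiating the closed form \eqref{eq:theta_star} for $\thetastar$ from Theorem~\ref{thm:unique}, treating $\sstar$ as a parameter where indicated. I would chain this with Lemma~\ref{lemma:val_prob} only where $\sstar$ itself moves. Throughout, write $r \equiv \sstar/(1-\sstar)$, so that
\[
  \thetastar = \frac{1}{B_C}\left[k + \phi H_C + r\,\DD\right].
\]
For interior $\sstar \in (0,1)$ we have $r > 0$ and $dr/d\sstar = 1/(1-\sstar)^2 > 0$. By Assumption~\ref{ass:params}(ii), $\DD > 0$, and from \eqref{eq:DeltaD} we read off $\partial\DD/\partial k = \partial\DD/\partial R_C = \partial\DD/\partial b_C = 1$ and $\partial\DD/\partial \HD = -\delta_C^{\ell}\lambda$.

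For part (i), each partial derivative is taken at fixed $\sstar$.
\begin{itemize}
  \item In $\phi$: the derivative is $H_C/B_C \ge 0$.
  \item In $H_C$: the derivative is $\phi/B_C \ge 0$.
  \item In $k$: the derivative is $(1+r)/B_C > 0$.
  \item In $R_C$ and $b_C$: the derivative is $r/B_C > 0$.
  \item In $\sstar$: the derivative is $\DD/\bigl(B_C(1-\sstar)^2\bigr) > 0$.
  \item In $B_C$: the bracket is strictly positive, since $k>0$ and the other terms are nonnegative, so $\thetastar$ is proportional to $1/B_C$ and strictly decreasing.
\end{itemize}
The monotonicity in $\phi$ and $H_C$ is only weak when $H_C = 0$ or $\phi = 0$. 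I would state it as strict under $\phi H_C > 0$, or read ``increasing'' weakly.

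For part (ii), I would use the chain rule
\[
  \frac{d\thetastar}{d\lambda} = \frac{\partial \thetastar}{\partial\lambda}\Big|_{\sstar} + \frac{\partial\thetastar}{\partial\sstar}\,\frac{\partial\sstar}{\partial\lambda}.
\]
The direct term is $(r/B_C)\,\partial\DD/\partial\lambda = -(r/B_C)\,\delta_C^{\ell}\HD \le 0$, which matches the displayed expression. The indirect term is a product of a positive factor from part (i) and a nonpositive factor, because $\sstar$ is decreasing in $\lambda$ by Lemma~\ref{lemma:val_prob}.

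This step is the main obstacle. To make the indirect channel rigorous I need differentiability of $\sstar$ in $\lambda$ at an interior crossing. The implicit function theorem applied to \eqref{eq:sstar_def} gives
\[
  \frac{\partial\sstar}{\partial\lambda} = -\frac{\delta_T^{\ell}(L_T+\Phi\Omega_T)}{p'(\sstar)R_T},
\]
which requires $p'(\sstar) > 0$, that is, MLRP holding strictly with differentiable likelihoods. If that is unavailable, I would argue instead by monotone comparative statics. Both maps are monotone: $\thetastar$ is increasing in $\sstar$, and $\sstar$ is weakly decreasing in $\lambda$ (shown in Lemma~\ref{lemma:val_prob}'s proof by pointwise raising of the left-hand side). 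Their composition is then weakly decreasing, and it is strictly decreasing whenever $\HD > 0$ through the direct channel. The two channels have the same sign, so no cancellation can arise.

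For part (iii), holding $\sstar$ fixed:
\begin{itemize}
  \item In $\HD$: the derivative is $-(r/B_C)\,\delta_C^{\ell}\lambda < 0$.
  \item In $\ell$: we have $\partial\DD/\partial\ell = -\delta_C^{\ell}\ln(\delta_C)\,\lambda\HD \ge 0$ since $\ln\delta_C < 0$, so $\partial\thetastar/\partial\ell = (r/B_C)\,\partial\DD/\partial\ell \ge 0$, strictly when $\HD > 0$.
\end{itemize}
I would also remark that if $\sstar$ were allowed to vary, it rises with $\ell$ by Proposition~\ref{prop:suppress} and Lemma~\ref{lemma:val_prob}, which reinforces the effect. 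The statement's restriction to fixed $\sstar$ avoids needing this.
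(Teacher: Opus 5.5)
Your proposal is correct and follows the paper's own argument. Both proceed by direct differentiation of \eqref{eq:theta_star} at fixed $\sstar$, and obtain the indirect $\lambda$ channel by composing with the monotonicity of $\sstar$ from Lemma~\ref{lemma:val_prob}. Your extra care is a sound refinement rather than a different route: you note that several of the strict signs the paper asserts need $\phi H_C > 0$ or $\HD > 0$, which the model does not impose, and you give a monotone-composition fallback in case $\sstar$ is not differentiable in $\lambda$.
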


\begin{proof}
  Direct differentiation of \eqref{eq:theta_star}: $\partial\thetastar/\partial\phi = H_C/B_C > 0$; $\partial\thetastar/\partial b_C = \frac{1}{B_C}\frac{\sstar}{1-\sstar} > 0$; $\partial\thetastar/\partial \sstar = \DD / \big(B_C(1-\sstar)^2\big) > 0$ by Assumption~\ref{ass:params}(ii); the $\lambda$, $\HD$, and $\ell$ statics follow from $\DD$'s dependence on $\delta_C^{\ell}\lambda\HD$; the indirect $\lambda$ channel follows by composition with Lemma~\ref{lemma:val_prob}.
\end{proof}

\subsection{Collective implication as the central structural variable}

\begin{theorem}[Limits of bilateral intervention]\label{thm:phi_dominance}
  Fix a violation of severity $\theta < \theta_H$ (below the external-verifiability region of Assumption~\ref{ass:dominance}) and suppose $\sstar$ is interior. Define
  \begin{align}
    \bar{\phi}(b_C) \;\equiv\; \frac{1}{H_C}\left[\,B_C(\theta + \varepsilon) - k - \frac{\sstar}{1-\sstar}\,\DD\,\right].
    \label{eq:phibar}
  \end{align}
  For $\phi > \bar{\phi}(b_C)$, $\shat = 0$ and decline obtains. Moreover:
  \begin{enumerate}[label=(\roman*)]
    \item $\bar{\phi}(b_C)$ is decreasing in $b_C$: management career leverage lowers the level of collective implication at which complete colleague defection sets in.
    \item $\bar{\phi}$ and $\thetastar$ are independent of $P_W$: employment protection for the whistleblower has no effect on the colleague equilibrium.
    \item $L_T$ affects $\bar{\phi}$ only through $\sstar$, and $\partial\sstar/\partial L_T$ is proportional to $\delta_T^{\ell}\lambda$. As $\lambda \to 0$, the influence of $L_T$ on the colleague equilibrium vanishes: no feasible increase in statutory liability restores support where exposure probability is low.
  \end{enumerate}
\end{theorem}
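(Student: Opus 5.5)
The plan is to read everything off the closed-form cutoff \eqref{eq:theta_star} of Theorem~\ref{thm:unique}, together with the realised support rate $\shat(\theta,\thetastar)=(\theta+\varepsilon-\thetastar)/2\varepsilon$ truncated to $[0,1]$. First observe that $\shat=0$ exactly when $\thetastar \ge \theta+\varepsilon$, i.e.\ when even the colleague with the highest possible signal $\theta_i=\theta+\varepsilon$ lies below the common cutoff and defects. Substituting \eqref{eq:theta_star} and solving the inequality $B_C\thetastar > B_C(\theta+\varepsilon)$ for $\phi$ (dividing by $H_C>0$, which I will assume strictly positive so that \eqref{eq:phibar} is well defined) gives exactly $\phi>\bar\phi(b_C)$. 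Then $\shat=0$; since $\sstar$ is interior, $0<\sstar$ and Lemma~\ref{lemma:val_prob} says validation requires $\shat\ge\sstar$, so decline obtains. The hypothesis $\theta<\theta_H$ is used only to ensure that Assumption~\ref{ass:dominance}(i) does not force validation independently of $s$.

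For (i), I will differentiate \eqref{eq:phibar} in $b_C$. By \eqref{eq:DeltaD}, $\partial\DD/\partial b_C=1$, so $\partial\bar\phi/\partial b_C = -\frac{1}{H_C}\frac{\sstar}{1-\sstar}<0$ for interior $\sstar$. Here $\sstar$ does not depend on $b_C$, since $b_C$ is absent from \eqref{eq:sstar_def}. For (ii), I will check that $P_W$ appears neither in the colleague payoffs \eqref{eq:coll_payoffs}, in $\DD$, nor in \eqref{eq:sstar_def}. It enters only $u_W$. Hence $\thetastar$ and $\bar\phi$, which are functions of $(B_C,k,\phi,H_C,\DD,\sstar,\theta,\varepsilon)$ alone, are invariant to $P_W$. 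I would add a sentence noting that $P_W$ may shift $W$'s reporting decision, and hence which reports reach the subgame, but not the subgame equilibrium conditional on a report.

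For (iii), $L_T$ is absent from \eqref{eq:coll_payoffs} and $\DD$, so it enters $\bar\phi$ only through $\sstar$. The chain rule gives $\partial\bar\phi/\partial L_T = -\frac{\DD}{H_C(1-\sstar)^2}\,\partial\sstar/\partial L_T$. To compute $\partial\sstar/\partial L_T$, I will implicitly differentiate \eqref{eq:sstar_def}: $p'(\sstar)R_T\,d\sstar + \delta_T^{\ell}\lambda\,dL_T=0$, so $\partial\sstar/\partial L_T = -\delta_T^{\ell}\lambda/(p'(\sstar)R_T)$, which is proportional to $\delta_T^\ell\lambda$. The main obstacle is that the proportionality constant $1/(p'(\sstar)R_T)$ itself depends on $\sstar$, which depends on $\lambda$. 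To conclude that the effect vanishes as $\lambda\to0$, I need $p'$ bounded away from zero on the relevant range. I would obtain this by strengthening MLRP to strict monotonicity with a continuously differentiable likelihood ratio on $[0,1]$, so that $p'$ has a positive minimum on the compact interval. As a fallback, I would argue on finite changes instead. Under Remark~\ref{rem:LT_gated}, any increase $\Delta L_T$ shifts the left-hand side of \eqref{eq:sstar_def} by at most $\delta_T^\ell\lambda\Delta L_T$. For bounded feasible $\Delta L_T$ this tends to zero, so by continuity and strict monotonicity of $p(\cdot)$ the crossing $\sstar$ moves by an amount vanishing in $\lambda$. Consequently $\bar\phi$ and $\thetastar$ are asymptotically unaffected, which gives the stated conclusion.
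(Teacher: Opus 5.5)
Your proposal is correct and follows the paper's proof essentially step for step. The steps are: $\shat=0$ iff $\thetastar\ge\theta+\varepsilon$ with \eqref{eq:theta_star} substituted; the sign of $b_C$ entering through $\DD$; the absence of $P_W$ from \eqref{eq:theta_star} and \eqref{eq:sstar_def}; and implicit differentiation of \eqref{eq:sstar_def}, giving $\partial\sstar/\partial L_T=-\delta_T^{\ell}\lambda/(p'(\sstar)R_T)$.

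Your worry that $p'(\sstar)$ itself moves with $\lambda$ is a refinement the paper skips, since it simply asserts the limit. Note, though, that strict monotonicity plus $C^1$ does not guarantee $\min p'>0$, so your finite-change fallback is the sounder version. Passing from a vanishing shift in $\sstar$ to a vanishing shift in $\bar{\phi}$ also needs $\sstar$ to stay bounded away from $1$ as $\lambda\to 0$, because of the factor $(1-\sstar)^{-2}$ in $\partial\bar{\phi}/\partial\sstar$.
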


\begin{proof}
  $\shat = 0$ iff $\thetastar \geq \theta + \varepsilon$; substituting \eqref{eq:theta_star} and solving for $\phi$ yields \eqref{eq:phibar}. When $\shat = 0$, $T$'s posterior receives no support signal; by Assumption~\ref{ass:params}(iii) and Lemma~\ref{lemma:val_prob} with $\shat = 0 < \sstar$, decline follows. (i): $b_C$ enters \eqref{eq:phibar} through $\DD$ with a negative sign. (ii): $P_W$ appears in neither \eqref{eq:theta_star} nor \eqref{eq:sstar_def}. (iii): differentiate \eqref{eq:sstar_def} implicitly; $\partial \sstar/\partial L_T = -\,\delta_T^{\ell}\lambda \,/\, \big(p'(\sstar)R_T\big) \to 0$ as $\lambda \to 0$.
\end{proof}

\begin{remark}
  Theorem~\ref{thm:phi_dominance} makes precise in what sense the two standard interventions fail. Employment protection is \textit{structurally} incapable: $P_W$ appears in no expression governing colleague or management behaviour, only in the whistleblower's own participation constraint, so it can convert silence into punished reporting but cannot convert decline into validation. Statutory liability is \textit{gated}: its entire influence is carried by the product $\lambda L_T$, so it is ineffective exactly in the low-visibility environments where decline is most attractive. The residual lever on the colleague side is the pair $(\phi, b_C)$, which no existing intervention addresses.

  It is worth being explicit about which real-world instruments this identifies. The Public Interest Disclosure Act 1998 operates on $P_W$: it reduces the punishment a reporter risks by making certain dismissals automatically unfair, and by construction does not touch $\phi$, $b_C$, or $\gamma$. The Freedom to Speak Up programme, including the network of Guardians it established, operates predominantly on $P_W$ as well, since a Guardian's core function is to support and protect the individual reporter; only the weaker, structurally incomplete channel through $b_C$ noted in Section~\ref{sec:policy} lies outside this. Statutory and common-law legal protections for whistleblowers generally, whatever their precise mechanism, share this property: they are protections extended to $W$, and a protection extended to $W$ is a change to $P_W$. Three decades of the dominant UK policy response have accordingly targeted the one parameter Theorem~\ref{thm:phi_dominance} shows cannot reach the colleague coordination failure.
\end{remark}

\subsection{Whistleblower's participation constraint}

\begin{proposition}[Participation constraint]\label{prop:participation}
  Let $\pi_W$ denote $W$'s equilibrium probability of validation conditional on reporting. A report is made iff
  \begin{align}
    \pi_W > \pi_W^* \equiv \frac{P_W + c_W}{B_W + P_W}.
    \label{eq:participation}
  \end{align}
  $\pi_W^*$ is increasing in $P_W$ and $c_W$ and decreasing in $B_W$. In particular, when decline is certain ($\pi_W = 0$) and $P_W + c_W > 0$, silence is the unique best response.
\end{proposition}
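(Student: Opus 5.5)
The proof compares $W$'s expected payoffs from the two actions, taking $\pi_W$ as given. The paper treats this as the whistleblower's reporting decision, and $P_W$ enters no other player's payoff that governs validation: it is absent from \eqref{eq:theta_star} and \eqref{eq:sstar_def}, as Theorem~\ref{thm:phi_dominance}(ii) records. So $\pi_W$ can be held fixed when $W$ optimises. From the payoff specification for $W$, reporting yields expected payoff
\begin{align*}
\pi_W (B_W - c_W) + (1-\pi_W)(-P_W - c_W) = \pi_W (B_W + P_W) - P_W - c_W ,
\end{align*}
while silence yields $0$. Reporting is therefore strictly preferred iff $\pi_W(B_W + P_W) > P_W + c_W$. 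Since $B_W > 0$ and $P_W \ge 0$ give $B_W + P_W > 0$, dividing through gives \eqref{eq:participation}. For the boundary case of exact equality I adopt the convention that $W$ reports only under strict preference, which matches the ``iff $>$'' form of the statement.

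For the comparative statics I differentiate $\pi_W^* = (P_W + c_W)/(B_W + P_W)$ directly.
\begin{itemize}
\item In $c_W$: the derivative is $1/(B_W+P_W) > 0$.
\item In $B_W$: the derivative is $-(P_W+c_W)/(B_W+P_W)^2 < 0$, using $c_W > 0$.
\item In $P_W$: the derivative is $\big[(B_W + P_W) - (P_W + c_W)\big]/(B_W+P_W)^2 = (B_W - c_W)/(B_W+P_W)^2$.
\end{itemize}

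The monotonicity in $P_W$ is the only step needing care, because its sign is that of $B_W - c_W$. I would argue that this is the relevant regime. If $B_W \le c_W$, then $\pi_W^* \ge 1$ for every $P_W$, so reporting is never optimal and the threshold is vacuous. Whenever reporting is feasible at all ($\pi_W^* < 1$, equivalently $B_W > c_W$), $\pi_W^*$ is strictly increasing in $P_W$. I would state this qualification explicitly, or equivalently read ``increasing in $P_W$'' as holding on the nontrivial region $B_W > c_W$.

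Finally, for the case $\pi_W = 0$, the expected payoff from reporting reduces to $-(P_W + c_W)$. This is strictly negative when $P_W + c_W > 0$, which holds automatically since $c_W > 0$. Silence, with payoff $0$, is then the unique best response. The same conclusion follows from \eqref{eq:participation}, since $\pi_W^* > 0 = \pi_W$ in that case.
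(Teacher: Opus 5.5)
Your proof is correct and is the paper's own argument: the paper's proof consists only of rearranging $\E[u_W \mid \text{Report}] = \pi_W(B_W - c_W) + (1-\pi_W)(-P_W - c_W) > 0$, and it leaves the comparative statics unproved. Your treatment of them is a genuine improvement, since $\partial \pi_W^*/\partial P_W = (B_W - c_W)/(B_W+P_W)^2$ shows that the claimed monotonicity in $P_W$ holds only when $B_W > c_W$; that is exactly the region in which reporting can ever be optimal, so the qualification you propose is needed for the statement as written.
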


\begin{proof}
  $\E[u_W \mid \text{Report}] = \pi_W(B_W - c_W) + (1-\pi_W)(-P_W - c_W) > 0$ rearranges to \eqref{eq:participation}.
\end{proof}

\begin{remark}\label{rem:silence_vs_suppression}
  When the conditions of Theorem~\ref{thm:phi_dominance} hold, decline upon reporting is certain, and by Proposition~\ref{prop:participation} the equilibrium outcome is silence whenever $P_W + c_W > 0$. Statutory protection that reduces $P_W$ lowers $\pi_W^*$ and can thereby move the outcome from silence to reporting; but since reporting is then met with certain decline, the shift is from the Silence equilibrium to the Decline equilibrium. Reports increase while validations do not. The correct measure of policy effectiveness is therefore not the volume of concerns raised but the volume of validated investigations.
\end{remark}

\subsection{Repeated game: the turnover ratchet}

The dynamics of credibility across periods require a carrier, and rational stationarity rules out the obvious candidate. A single long-lived Bayesian manager learns nothing from its own refusals to investigate: the disposition ``unsubstantiated'' is then an artifact of its own action rather than evidence about the report, and in a stationary game with fixed parameters nothing degrades. The carrier is instead \textit{assessor turnover interacting with a coarse record}. Under the short tenure of Remark~\ref{rem:tenure}, the manager who assesses the report at $t+1$ is typically not the manager who disposed of the report at $t$; what is inherited is the file, and the file records dispositions, not the processes behind them. An unsubstantiated disposition may reflect a genuine investigation that found the report vexatious, or a decline that investigated nothing, and the file does not say which.

Formally, let the organisation's reporting culture be a persistent hidden type $J \in \{H, L\}$, under which reports are genuine at rates $q_H > q_L$ respectively, and let $\iota \in (0,1)$ denote the rate at which, in the assessors' common belief, unsubstantiated dispositions arise from genuine investigation rather than decline. Successive assessors carry a common posterior weight $\rho_t$ on $J = H$ and assess incoming reports at the prior $p_t = \rho_t q_H + (1-\rho_t) q_L$.

\begin{proposition}[Turnover ratchet]\label{prop:ratchet}
  Under assessor turnover with a coarse record:
  \begin{enumerate}[label=(\roman*)]
    \item Each unsubstantiated disposition is genuinely informative to a successor. Its likelihood under culture $J$ is $1 - \iota\, q_J$ (the disposition is avoided only if the report was both investigated and genuine), so the posterior odds on $H$ are multiplied by
    \begin{align}
      \Lambda \;=\; \frac{1 - \iota\, q_H}{\,1 - \iota\, q_L\,} \;<\; 1 .
      \label{eq:prior_suppress}
    \end{align}
    \item Along a path of repeated decline, $\rho_t \to 0$ geometrically at rate $\Lambda$, and $p_t \downarrow q_L$ monotonically with $p_t - q_L = O(\Lambda^t)$.
    \item Define $p^{\min}$ as the prior at which \eqref{eq:sstar_def} holds with equality at $\shat = 1$. If $q_L < p^{\min}$, there exists $t^*$ such that for all $t \geq t^*$ decline is unconditional (the $\sstar > 1$ convention of Lemma~\ref{lemma:val_prob}) and the support equilibrium is unreachable without an external shock to the record.
    \item The ratchet vanishes without turnover. An assessor who made the past decline decisions knows the process behind each disposition; conditional on a known decline, the disposition has probability $1$ under either culture, the likelihood ratio is unity, and no degradation occurs.
    \item If decline is accompanied by a persuasion signal $\sigma > 0$ (Section~\ref{sec:elimination}), the fabricated documents enter the file alongside the disposition, and a successor who cannot identify which documents are strategic reads them as further evidence. Modelling the contamination as an additional factor $\Lambda_\sigma = 1 - \alpha\sigma \in (0,1)$, with $\alpha > 0$ and $\alpha\sigma < 1$, multiplying the per-record likelihood ratio, the posterior odds on $H$ are multiplied by $\Lambda \cdot \Lambda_\sigma < \Lambda$: the record degrades strictly faster under elimination than under passive decline, and the convergence in (ii) accelerates to rate $\Lambda\Lambda_\sigma$.
  \end{enumerate}
\end{proposition}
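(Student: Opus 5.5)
The plan is to treat Proposition~\ref{prop:ratchet} as a sequence of Bayesian updating computations on the two-point hidden type $J$, done in odds form. For each part I set up the likelihood of the observed record under $J=H$ and $J=L$, take ratios, and then feed the resulting prior sequence $p_t$ into Lemma~\ref{lemma:val_prob}.

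For (i), I first derive the likelihood. Under culture $J$ a report is genuine with probability $q_J$. In the successors' common belief, an unsubstantiated disposition arises from genuine investigation at rate $\iota$. The disposition is therefore avoided only on the event "investigated and genuine," which has probability $\iota q_J$. This gives likelihood $1-\iota q_J$, and Bayes' rule in odds form multiplies $\rho_t/(1-\rho_t)$ by $\Lambda$. Since $q_H>q_L$ and $\iota>0$, we get $1-\iota q_H<1-\iota q_L$, so $\Lambda<1$. Positivity of the denominator follows from $\iota<1$ and $q_L\le 1$.

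For (ii), iterating gives odds $O_t=\Lambda^t O_0$. Then $\rho_t=O_t/(1+O_t)\le O_t$, so $\rho_t=O(\Lambda^t)$ and $\rho_t$ is strictly decreasing. Because $p_t-q_L=\rho_t(q_H-q_L)$, monotone convergence $p_t\downarrow q_L$ at rate $O(\Lambda^t)$ follows at once.

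For (iii), I use the comparative static from Lemma~\ref{lemma:val_prob}: the left-hand side of \eqref{eq:sstar_def} is increasing in the prior. Evaluated at $\shat=1$ it is continuous and strictly increasing in $p$, since $p(1)$ is an increasing function of $p$ for fixed $f_G(1)/f_V(1)$. Hence $p^{\min}$ is well defined, and for $p<p^{\min}$ the left-hand side is below $C_T$ at $\shat=1$, which is the $\sstar>1$ convention. Since $p_t\to q_L<p^{\min}$, some $t^*$ exists with $p_t<p^{\min}$ for all $t\ge t^*$, by monotonicity from (ii). Then $\shat\le 1<\sstar$ forces decline regardless of colleague play. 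Decline produces another unsubstantiated record, so the path is self-sustaining and $p_t$ only falls further. Escaping therefore requires a change to $\rho_t$ not generated by the record, i.e.\ an external shock.

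For (iv), the assessor conditions on a known decline. The probability of an unsubstantiated disposition given decline is $1$ under both $H$ and $L$, so the likelihood ratio is $1$ and the odds are unchanged.

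For (v), I multiply the per-record likelihood ratio by $\Lambda_\sigma=1-\alpha\sigma\in(0,1)$, which gives the factor $\Lambda\Lambda_\sigma<\Lambda$. Rerunning (ii) with this factor gives the accelerated rate.

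The main obstacle is interpretive rather than computational, and it arises in (i) and (iv). I must justify that the successor treats $\iota$ as the belief-based probability of investigation, independent of $J$ and of the report's genuineness. Otherwise the likelihood $1-\iota q_J$ would pick up culture-dependent investigation rates. I would state this independence explicitly as the modelling content of "coarse record." The same care applies in (iii): to close the argument that the equilibrium is unreachable, I need to note that Assumption~\ref{ass:dominance}(i) concerns externally verifiable severities, which are excluded from the file-based updating path.
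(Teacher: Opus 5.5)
Your proposal is correct and follows the paper's proof essentially step for step: odds-form Bayesian updating with likelihood $1-\iota q_J$, iteration to $\Lambda^t$ for geometric decay, monotonicity in the prior of the left-hand side of \eqref{eq:sstar_def} at $\shat=1$ for (iii), a unit likelihood ratio for a known decline in (iv), and multiplication by $\Lambda_\sigma$ in (v). Your extra remarks make explicit points the paper leaves implicit without changing the argument: the independence of $\iota$ from $J$, the decline path becoming self-sustaining once $\sstar>1$, and the caveat about the externally verifiable region of Assumption~\ref{ass:dominance}(i).
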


\begin{proof}
  (i) An unsubstantiated disposition fails to arise only if the report was investigated (probability $\iota$) and found genuine (probability $q_J$): its likelihood is $1 - \iota q_J$, and the odds update is the ratio $\Lambda$, with $\Lambda < 1$ iff $q_H > q_L$. (ii) After $t$ such records the posterior odds are $\Lambda^t$ times the initial odds, which tend to $0$; $p_t = q_L + \rho_t(q_H - q_L)$ is decreasing in $t$ with the stated rate. (iii) The left-hand side of \eqref{eq:sstar_def} at $\shat = 1$ is increasing in the prior; since $p_t \downarrow q_L < p^{\min}$, it falls below $C_T$ in finite time and the boundary convention applies thereafter. (iv) Conditional on the disposition being a decline, its probability is $1$ under either culture; the likelihood ratio is $1$ and the posterior is unchanged. (v) The odds update per period is the product of the disposition ratio $\Lambda$ from (i) and the document ratio $\Lambda_\sigma$; since $\Lambda_\sigma < 1$, the product is strictly below $\Lambda$, and after $t$ periods the odds are $(\Lambda\Lambda_\sigma)^t$ times the initial odds.
\end{proof}

\begin{remark}\label{rem:turnover_carrier}
  Parts (ii) and (iv) together locate the ratchet precisely: it is a succession phenomenon, not an evolutionary one. The state variable is the file; the dynamics are the rational inferences of successive short-tenured assessors from a record that pools non-investigation with investigation. The same low $\delta_T$ that makes decline individually attractive (Remark~\ref{rem:tenure}) is what makes its consequences compound: long-tenured management would neither decline nor, remembering its own choices, inherit a degraded record. Short tenure thus enters the model twice, once as motive and once as memory loss, and the two effects reinforce.
\end{remark}

\begin{remark}\label{rem:freeride}
  $\Lambda$ is decreasing in $\iota$: $\partial\Lambda/\partial\iota = (q_L - q_H)/(1-\iota q_L)^2 < 0$. The record therefore degrades fastest where investigation is generally believed to be credible, because it is the credibility of the wider investigative regime that makes an unsubstantiated disposition informative. Each decline launders itself through that credibility: decline is most corrosive precisely in systems that are otherwise most trusted.
\end{remark}

\begin{remark}\label{rem:reputation}
  A complementary deterrence dynamic operates on the participation margin and requires no turnover at all. Prospective whistleblowers and colleagues learn about a persistent management commitment type from the observed treatment of past reporters, in the manner of reputation models \citep{kreps1982,milgrom1982}: each observed punishment raises the posterior that management is committed to decline, lowering $\pi_W$ and raising the campaign confidence $\gamma$. This channel degrades willingness to report and to support, rather than assessed credibility, and operates alongside the record channel of Proposition~\ref{prop:ratchet}.
\end{remark}

\section{Active elimination}\label{sec:elimination}

Coordinated colleague behaviour against whistleblowers is documented empirically by the mobbing and workplace harassment literature \citep{leymann1996,einarsen2011}, but is not modelled therein as a strategic equilibrium; this section supplies that treatment.

\subsection{The persuasion technology}

The colleague strategy space is now extended to include Eliminate. A documented counter-narrative is contributed to by colleagues who eliminate, constituting a persuasion signal $\sigma$ directed at management. Its effect on $T$'s posterior is
\begin{align}
  p(\shat, \sigma) = \frac{p \cdot f_G(\shat)\, g_G(\sigma)}{p \cdot f_G(\shat)\, g_G(\sigma) + (1-p) \cdot f_V(\shat)\, g_V(\sigma)},
  \label{eq:posterior}
\end{align}
which, by Assumption~\ref{ass:params}(iv), is decreasing in $\sigma$: a stronger elimination signal lowers the posterior that the report is genuine. Elimination succeeds when $p(\shat,\sigma)$ is driven below the level at which \eqref{eq:suppress_cond} holds.

\begin{remark}
  Plausible deniability is conferred on management by the persuasion signal. Initiation of the elimination or explicit endorsement is not required of $T$: once the documented posterior has been moved, decline can be attributed to a credibility assessment of the reporter rather than to the content of the report. The elimination and the decline are formally separate acts by separate parties, which is why coordinated decline of this form is extraordinarily difficult to challenge legally.
\end{remark}

\subsection{The elimination decision}

Colleagues are assumed non-pivotal: with $n$ large, the probability that decline occurs is not materially changed by one colleague's switch between Defect and Eliminate, and is denoted $\gamma$, the commonly held \textit{campaign success probability}, formed from public signals about management commitment, expected participation, the whistleblower's external protection, and the visibility of the environment. $\gamma$ is a belief about an aggregate event and is accordingly common across colleagues; it is not a private signal in the sense of $\theta_i$. What differs across colleagues is not this belief but the payoff parameters $\beta_i$ and $\mu_i$ it is weighed against, so the resulting decision rule is a common $\gamma$ compared with an \textit{individual} threshold, denoted $\gamma_i^*$ below.

\begin{proposition}[Elimination threshold]\label{prop:elim_threshold}
  Comparing \eqref{eq:elim_payoffs} against the Defect rows of \eqref{eq:coll_payoffs}: if
  \begin{align}
    \beta_i \;\leq\; \delta_C^{\ell}\lambda\,\big(\HE - \HD\big),
    \label{eq:elim_dominated}
  \end{align}
  then Eliminate is weakly dominated by Defect for every $\gamma \in [0,1]$, and strictly dominated whenever the inequality is strict or $\mu_i > 0$ and $\gamma < 1$. Otherwise, Eliminate is chosen over Defect by colleague $i$ iff $\gamma \geq \gamma_i^*$, where
  \begin{align}
    \gamma_i^* \;=\; \frac{\mu_i}{\;\beta_i - \delta_C^{\ell}\lambda\big(\HE - \HD\big) + \mu_i\;}.
    \label{eq:gamma_star}
  \end{align}
  $\gamma_i^*$ is increasing in $\mu_i$, $\lambda$, and $\HE - \HD$, and decreasing in $\beta_i$.
\end{proposition}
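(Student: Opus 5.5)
The approach is to compare the expected payoffs of Eliminate and Defect directly, at the common campaign success probability $\gamma$. Here $\gamma$ is the probability of Decline, which non-pivotality makes independent of $i$'s own choice between the two actions. From \eqref{eq:elim_payoffs} and the Defect rows of \eqref{eq:coll_payoffs},
\begin{align*}
  \E[u_{c_i}(\mathrm{Elim})] - \E[u_{c_i}(\mathrm{Defect})]
  &= \gamma\big(b_C + \beta_i - \delta_C^{\ell}\lambda\HE\big) - (1-\gamma)\mu_i - \gamma\big(b_C - \delta_C^{\ell}\lambda\HD\big) \\
  &= \gamma\,\kappa_i - (1-\gamma)\mu_i,
\end{align*}
where $\kappa_i \equiv \beta_i - \delta_C^{\ell}\lambda(\HE - \HD)$. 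The $b_C$ terms cancel, and the Validate branch of Defect pays $0$. The whole proposition then reduces to studying the sign of the affine function $D_i(\gamma) = \gamma(\kappa_i + \mu_i) - \mu_i$ on $[0,1]$.

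\emph{Dominance case.} If \eqref{eq:elim_dominated} holds, then $\kappa_i \le 0$. Both terms of $D_i(\gamma)$ are then nonpositive for every $\gamma\in[0,1]$, so Eliminate is weakly dominated. For strictness, I would argue by cases:
\begin{itemize}
  \item If $\kappa_i<0$, then $\gamma\kappa_i<0$ for $\gamma>0$, and at $\gamma=0$ we have $D_i=-\mu_i$.
  \item If $\mu_i>0$ and $\gamma<1$, then $-(1-\gamma)\mu_i<0$.
\end{itemize}
Strictness can fail at a boundary, namely $\gamma=0$ with $\mu_i=0$, or $\gamma=1$ with $\kappa_i=0$. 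I would state the strictness claim with that caveat, reading ``strictly dominated'' as holding wherever the corresponding term is nonzero.

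\emph{Threshold case.} If $\kappa_i>0$, then $D_i$ is strictly increasing in $\gamma$. It equals $-\mu_i\le 0$ at $\gamma=0$ and $\kappa_i>0$ at $\gamma=1$, so it has a unique root $\gamma_i^* = \mu_i/(\kappa_i+\mu_i)\in[0,1)$, which is \eqref{eq:gamma_star}. Eliminate is (weakly) preferred iff $\gamma\ge\gamma_i^*$, with indifference at equality treated as the tie-break toward Eliminate, as the statement's ``$\ge$'' indicates.

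\emph{Comparative statics.} These follow by differentiating $\gamma_i^* = \mu_i/(\kappa_i+\mu_i)$:
\begin{itemize}
  \item $\partial\gamma_i^*/\partial\mu_i = \kappa_i/(\kappa_i+\mu_i)^2 > 0$.
  \item $\partial\gamma_i^*/\partial\kappa_i = -\mu_i/(\kappa_i+\mu_i)^2 \le 0$.
\end{itemize}
Since $\kappa_i$ is increasing in $\beta_i$ and decreasing in $\lambda$ and in $\HE-\HD$ (using $\delta_C^{\ell}>0$ and $\HE\ge\HD$), composing gives the stated signs. Monotonicity is strict when $\mu_i>0$; when $\mu_i=0$ we have $\gamma_i^*\equiv 0$, so it holds only weakly.

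No step is a real obstacle. The only care needed is in the boundary and tie cases: the weak versus strict dominance at $\gamma\in\{0,1\}$ when $\mu_i=0$ or $\kappa_i=0$, and the degenerate $\mu_i=0$ case in the comparative statics. I would handle both by explicit remarks rather than by altering the formulas.
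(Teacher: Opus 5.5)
Your proof is correct and follows the paper's argument essentially step for step. It uses the same payoff difference $\gamma[\beta_i - \delta_C^{\ell}\lambda(\HE-\HD)] - (1-\gamma)\mu_i$, the same split on the sign of the bracket, and the same comparative statics; the only difference is that the paper differentiates in $\lambda$ directly where you compose through $\kappa_i$. Your caveat at $\gamma = 0$, $\mu_i = 0$ is a genuine refinement that the paper's proof passes over. By the same reasoning, strict monotonicity in $\lambda$ also needs $\HE > \HD$, as the factor $(\HE-\HD)$ in the paper's derivative shows.
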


\begin{proof}
  The expected payoffs conditional on the common decline probability $\gamma$ are
  \begin{align*}
    \E[\mathrm{Elim}] &= \gamma\big(b_C + \beta_i - \delta_C^{\ell}\lambda\HE\big) + (1-\gamma)(-\mu_i), \\
    \E[\mathrm{Defect}] &= \gamma\big(b_C - \delta_C^{\ell}\lambda\HD\big) + (1-\gamma)\cdot 0 .
  \end{align*}
  The difference is
  \begin{align}
    \E[\mathrm{Elim}] - \E[\mathrm{Defect}] \;=\; \gamma\Big[\beta_i - \delta_C^{\ell}\lambda\big(\HE-\HD\big)\Big] \;-\; (1-\gamma)\,\mu_i .
    \label{eq:elim_diff}
  \end{align}
  If the bracket is non-positive, \eqref{eq:elim_diff} is non-positive for all $\gamma$, and strictly negative whenever the bracket is strictly negative, or $\mu_i > 0$ and $\gamma < 1$: weak dominance in general, strict dominance under the stated conditions. If the bracket is positive, \eqref{eq:elim_diff} is increasing in $\gamma$ and crosses zero at \eqref{eq:gamma_star}. Comparative statics: $\partial\gamma_i^*/\partial\lambda = \mu_i\,\delta_C^{\ell}(\HE-\HD)\,/\,\mathrm{den}^2 > 0$; the others are immediate.
\end{proof}

\begin{remark}
  Condition \eqref{eq:elim_dominated} is the sharpest result of the section. Elimination differs from defection in kind, not merely in degree: because the campaign is documentary and attributable, its exposure premium $\HE - \HD$ can be made large enough, or the exposure environment $\lambda$ credible enough, that participation is dominated \textit{regardless of confidence in success}. No analogous kill condition exists for passive defection, which remains rational under some beliefs however large its exposure liability, since silence carries no attributable record. Deterrence of elimination is therefore a discrete, achievable policy objective in a way that deterrence of passive defection is not.
\end{remark}

\subsection{Fragility of elimination under scrutiny}

\begin{remark}[Differential deterrence]\label{prop:fragility}
  An increase in $\lambda$ moves both margins against decline, by combining two results already established. It raises each colleague's threshold $\gamma_i^*$ (Proposition~\ref{prop:elim_threshold}'s comparative static, $\partial\gamma_i^*/\partial\lambda > 0$) and pushes the environment toward the kill condition \eqref{eq:elim_dominated}; and it lowers $\thetastar$ (Corollary~\ref{cor:theta_star_comparative}(ii)), expanding support. The elimination margin, however, admits the discrete regime change of \eqref{eq:elim_dominated}, whereas the defection margin moves only continuously: raising $\lambda$ can switch elimination off entirely but can only ever shrink, never eliminate, passive defection.
\end{remark}

\begin{remark}\label{rem:jamming}
  The elimination ratchet (Proposition~\ref{prop:ratchet}(v)) is robust to assessor sophistication in a way the passive ratchet is not. The passive ratchet is defeated by memory: an assessor who knows a disposition was a decline assigns it likelihood ratio one (Proposition~\ref{prop:ratchet}(iv)). The elimination ratchet survives even full awareness that fabrication is possible, because the successor faces a signal-jamming problem: the strategic documents cannot be separated from the genuine ones, and a correctly calibrated discount still leaves the file informative against the reporter. Only the participants themselves, who know which documents were fabricated, can discount fully; every other assessor, internal or external, is misled to some degree. This is the precise sense in which elimination makes the failure self-documenting, and it is the channel by which a skilled management can deliberately disable the whistleblowing function: each engineered elimination lowers the recorded prior against which the next report is judged, and does so in a way that no successor can fully undo from the file alone.
\end{remark}

\subsection{Collusive elimination}\label{sec:collusion}

The elimination decision analysed so far is decentralised: each colleague independently compares a commonly held belief $\gamma$ about aggregate campaign success against an individual threshold $\gamma_i^*$, with no communication between colleagues and no joint commitment. Some of the most damaging documented episodes do not fit this description. At the Countess of Chester Hospital, the tribunal in \citet{gilby2025} found that the chairman and three directors had jointly planned a covert operation, styled Project Countess, to remove a chief executive who had made protected disclosures, coordinating their actions and later deleting messages and documents in an attempt to escape accountability. This is collusion in the ordinary sense: a small group communicating, sharing information, and committing in advance to act as a bloc, closer to the literature on collusion within hierarchical organisations \citep{tirole1986} than to the decentralised aggregation problem modelled above. This subsection extends the model to admit it.

\begin{definition}[Coalition]\label{def:coalition}
  A coalition is a set of $m \leq n$ colleagues who, prior to the realisation of the stage game, can communicate and commit to a joint course of action: either all $m$ members execute Eliminate, or the coalition does not act.
\end{definition}

\begin{remark}[The same technology, applied to Support, is self-organisation]\label{rem:self_organisation}
  Definition~\ref{def:coalition} does not presuppose an adversarial purpose. The same coordination technology, applied to Support rather than Eliminate, is colleagues jointly agreeing to corroborate together rather than each gambling independently on whether enough others will do the same. What distinguishes the two directions is not the technology but its exposure structure. A coalition organised against $W$ must conceal itself, because its members' actions are only individually defensible if attributable to independent judgement rather than to a plan; concealment is what generates the discovery risk $\rho$ and the aggravated exposure $\Psi_i$ of Proposition~\ref{prop:leak}. A coalition organised in support of $W$ has no comparable need for secrecy: corroborating together in the open is not a liability to be concealed, it is the act itself. Under the amnesty of Proposition~\ref{prop:amnesty}, this asymmetry sharpens further, since amnesty removes the one cost, $\phi H_C$, that would otherwise give supporting colleagues a private reason to prefer acting alone. Colleague amnesty is accordingly better understood as a self-organisation instrument than as a mirror image of the anti-collusion instrument developed later in this section: it does not merely lower an individual threshold, it removes the reason supporting colleagues would ever need to hide from one another, permitting exactly the open coordination that Definition~\ref{def:coalition} shows is available, at no risk, once self-incrimination is off the table.
\end{remark}

\begin{assumption}[Persuasion threshold and discovery timing]\label{ass:persuasion_threshold}
  There exists $n^{\dagger} \leq n$ such that a campaign executed by $n_E \geq n^{\dagger}$ colleagues drives $p(\shat,\sigma)$ below the level at which \eqref{eq:suppress_cond} holds regardless of $\shat$, producing the Elimination outcome of Definition~\ref{def:equilibria}(iii) with certainty; campaigns with $n_E < n^{\dagger}$ do not. Consistent with the documented pattern in which such coalitions are established only by a subsequent investigative or judicial process rather than caught in the act, discovery is assumed to occur, generically, after $T$'s decision has already been made and acted upon: a coalition reaching $n^{\dagger}$ is taken to succeed with certainty, and the risk its members bear is a risk of later discovery, not a risk of failing to secure the decision in the first place.
\end{assumption}

A coalition of size $m \geq n^{\dagger}$ removes the aggregation uncertainty entirely for its members: persuasiveness is no longer a matter of belief about others' independent choices, and, by Assumption~\ref{ass:persuasion_threshold}, the decision it targets is not later reopened by whatever discovery eventually follows. What replaces aggregation uncertainty is a different uncertainty, internal to the conspiracy itself and unresolved for as long as it remains unconfirmed.

\begin{proposition}[Post-decision exposure as a nested participation constraint]\label{prop:leak}
  At any point after the decline has been acted upon, each coalition member privately weighs whether to disclose the conspiracy's existence. Let $B_i', c_i', P_i' \geq 0$ denote, respectively, this member's benefit from a validated disclosure of the conspiracy, personal cost of making it, and punishment if it is not believed. By the argument of Proposition~\ref{prop:participation}, the member discloses iff their belief that such a disclosure would be credited exceeds
  \begin{align}
    \pi_i'^{\,*} = \frac{P_i' + c_i'}{B_i' + P_i'}.
    \label{eq:leak_threshold}
  \end{align}
  Let $\rho \in (0,1)$ denote the population probability that a given member's belief clears this threshold. Members' disclosure decisions are independent, so the coalition's existence is never confirmed, by disclosure or by any other route, with probability $(1-\rho)^m$.
\end{proposition}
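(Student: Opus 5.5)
The plan has two parts, matching the two claims in the statement: the disclosure threshold \eqref{eq:leak_threshold}, and the probability $(1-\rho)^m$ that the coalition is never confirmed.

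For the threshold, I will map the member's disclosure decision onto the whistleblower's problem of Proposition~\ref{prop:participation}. At any date after the decline has been acted upon, member $i$ chooses between Disclose and Stay silent. The payoffs correspond as follows. Staying silent is the outside option, normalised to $0$; this parallels $W$'s Silent payoff. Disclosing yields $B_i' - c_i'$ if the disclosure is credited and $-P_i' - c_i'$ if it is not.

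One point must be checked first. By Assumption~\ref{ass:persuasion_threshold}, the decline is not reopened by later discovery, so the underlying decision's payoffs are sunk. They are therefore common to both branches and drop out of the comparison. Writing $\pi_i'$ for the member's belief that the disclosure is credited, the expected gain from disclosing is
\[
\pi_i'(B_i'-c_i') + (1-\pi_i')(-P_i'-c_i').
\]
This is positive iff $\pi_i' > (P_i'+c_i')/(B_i'+P_i')$, by exactly the rearrangement used in the proof of Proposition~\ref{prop:participation}. That gives \eqref{eq:leak_threshold}.

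For the second claim, define the indicator $D_i = \mathbf{1}[\pi_i' > \pi_i'^{\,*}]$. By the definition of $\rho$, $\Prob(D_i = 1) = \rho$ for each member. The statement stipulates that members' disclosure decisions are independent, so
\[
\Prob\Big(\textstyle\bigcap_{i=1}^m \{D_i = 0\}\Big) = (1-\rho)^m .
\]

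The remaining step is the phrase ``by disclosure or by any other route''. I will read it as saying that, in the model, confirmation happens only through disclosure by a member. This is consistent with Assumption~\ref{ass:persuasion_threshold}, under which the campaign succeeds with certainty and discovery is an after-the-fact investigative event. On that reading, non-confirmation coincides with the event that no member discloses, and its probability is $(1-\rho)^m$.

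I expect this identification to be the main obstacle. If exogenous discovery routes were admitted, $(1-\rho)^m$ would only be an upper bound on the probability of never being confirmed, so I will state the exclusion explicitly as part of the modelling convention rather than derive it. The inequality $\rho\in(0,1)$ also gives the qualitative corollary that this probability decreases strictly and geometrically in the coalition size $m$.
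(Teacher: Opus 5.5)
Your proposal is correct and follows the paper's own proof, which likewise treats the disclosure decision as Proposition~\ref{prop:participation} relabelled and obtains $(1-\rho)^m$ directly from independence of the $m$ disclosure decisions. Your explicit convention that confirmation occurs only through member disclosure (otherwise $(1-\rho)^m$ is merely an upper bound) states openly an identification that the paper's one-line proof leaves implicit, and this matters because the text immediately after the proposition also allows confirmation ``by external investigation''.
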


\begin{proof}
  The disclosure decision is structurally identical to the whistleblower's participation decision of Section~3.4, with the conspiracy in the role of the violation and the coalition member in the role of $W$; \eqref{eq:leak_threshold} is \eqref{eq:participation} relabelled. Independence of $m$ such decisions gives non-discovery probability $(1-\rho)^m$.
\end{proof}

$\rho$ is left as a primitive population probability rather than derived from a further, fully specified game, exactly as $\gamma$ is elsewhere: Proposition~\ref{prop:leak} establishes the isomorphism to the participation constraint and hence what $\rho$ depends on in principle, without closing the model on a second layer of equilibrium computation.

If eventually confirmed, whether by disclosure or by external investigation, the coalition is exposed as a conspiracy rather than as a set of independent eliminators, and this is treated as strictly worse: coordinated concealment is an aggravating fact in its own right, distinct from and additional to ordinary elimination exposure. Let $\Psi_i > \HE$ denote the harm to member $i$ if the coalition is eventually confirmed. Because success is already certain under Assumption~\ref{ass:persuasion_threshold}, a member's expected payoff from joining a coalition of size $m \geq n^{\dagger}$ depends only on whether that later confirmation occurs:
\begin{align}
  U_i(m) \;=\; (1-\rho)^m\left(b_C + \beta_i\right) \;-\; \left[1-(1-\rho)^m\right]\Psi_i .
  \label{eq:coalition_payoff}
\end{align}

\begin{proposition}[Minimal viable coalitions dominate]\label{prop:minimal_coalition}
  Among coalitions satisfying Assumption~\ref{ass:persuasion_threshold}, $U_i(m)$ is strictly decreasing in $m$ for $m \geq n^{\dagger}$. The minimal viable coalition, $m = n^{\dagger}$, strictly dominates every larger coalition.
\end{proposition}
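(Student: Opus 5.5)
The plan is to treat $U_i(m)$ in \eqref{eq:coalition_payoff} as a function of the single variable $x_m \equiv (1-\rho)^m$, the non-confirmation probability from Proposition~\ref{prop:leak}. Rewriting gives
\begin{align*}
  U_i(m) \;=\; x_m\,(b_C + \beta_i) - (1 - x_m)\,\Psi_i \;=\; x_m\,\big(b_C + \beta_i + \Psi_i\big) - \Psi_i ,
\end{align*}
so $U_i$ is affine in $x_m$ with slope $b_C + \beta_i + \Psi_i$. Monotonicity in $m$ then reduces to two facts: the sign of this slope, and the monotonicity of $x_m$ in $m$.

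For the slope, I would use the standing assumptions: $b_C \geq 0$ and $\beta_i \geq 0$ by Assumption~\ref{ass:params}(i), and $\Psi_i > \HE \geq \HD \geq 0$ by the definition preceding \eqref{eq:coalition_payoff}. Hence $b_C + \beta_i + \Psi_i \geq \Psi_i > 0$, so $U_i$ is strictly increasing in $x_m$.

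For $x_m$, since $\rho \in (0,1)$ by Proposition~\ref{prop:leak}, we have $1-\rho \in (0,1)$. Then $x_{m+1} = (1-\rho)\,x_m < x_m$ for every $m$, because $x_m > 0$. Composing the two facts, $U_i(m+1) - U_i(m) = -\rho\,(1-\rho)^m\,(b_C+\beta_i+\Psi_i) < 0$ for all $m \geq n^{\dagger}$. This gives strict decrease on the viable range.

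Strict dominance of $m = n^{\dagger}$ over every larger viable $m$ then follows by chaining these one-step inequalities. It is worth noting explicitly why the comparison is legitimate over this range. By Assumption~\ref{ass:persuasion_threshold}, every $m \geq n^{\dagger}$ secures the decline with certainty, so \eqref{eq:coalition_payoff} is the correct payoff for all such $m$. The only $m$-dependence therefore runs through confirmation risk, and enlarging the coalition buys no additional success probability. I expect no real obstacle. The only step needing care is confirming that $\Psi_i > 0$ makes the slope strictly positive even when $b_C = \beta_i = 0$, and that viability is what justifies ignoring any benefit of size.
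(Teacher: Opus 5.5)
Your proof is correct and follows essentially the paper's argument: both isolate the $m$-dependence in $(1-\rho)^m$ with the positive coefficient $b_C+\beta_i+\Psi_i$. The only difference is that you use the exact one-step difference $-\rho(1-\rho)^m(b_C+\beta_i+\Psi_i)$ where the paper differentiates with $m$ treated as continuous, which is if anything the cleaner treatment of an integer variable. (A minor citation slip: $b_C \geq 0$ is stated where the colleague payoffs are defined, not in Assumption~\ref{ass:params}(i), which does not affect the argument.)
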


\begin{proof}
  For $m \geq n^{\dagger}$, persuasiveness is already certain and does not improve with further members, while $(1-\rho)^m$ is strictly decreasing in $m$ for $\rho \in (0,1)$. Differentiating \eqref{eq:coalition_payoff} with respect to $m$ (treated as continuous for this purpose): $\partial U_i/\partial m = (1-\rho)^m\ln(1-\rho)\big[(b_C+\beta_i) + \Psi_i\big] < 0$, since $\ln(1-\rho) < 0$ and the bracketed term is positive.
\end{proof}

\begin{remark}
  A coalition recruits additional members only to clear $n^{\dagger}$, never beyond it: every member added past the minimum increases the probability of eventual discovery without adding to the certainty of success, which is already at its maximum. The small, specifically named core in \citet{gilby2025}, four individuals rather than a broad campaign, is consistent with this: a rational conspiracy is compact by construction, not merely by circumstance.
\end{remark}

\begin{remark}\label{rem:collusion_amnesty}
  Collusion dominates decentralised elimination exactly when decentralised coordination is unreliable ($\gamma$ well below one, as in a workforce too dispersed or too uncertain of management's commitment to coordinate independently) but a compact, communicating core can achieve certainty of success at low $m$: comparing \eqref{eq:coalition_payoff} at $m = n^{\dagger}$ against the decentralised expected payoff underlying \eqref{eq:elim_diff}, a member prefers the minimal coalition whenever the certainty of success it buys outweighs the decentralised route's own uncertainty, net of the extra exposure premium $\Psi_i$ carries relative to the decentralised $\mu_i$. This is a reason to expect collusive elimination specifically in the cases where the decentralised mechanism of Section~\ref{sec:elimination} is weakest, rather than as an alternative confined to separate cases.

  The amnesty provision of Proposition~\ref{prop:amnesty} acquires a second function here. Applied to a conspirator rather than to an ordinary supporting colleague, amnesty lowers the effective $P_i'$ in \eqref{eq:leak_threshold}, lowering $\pi_i'^{\,*}$ and thereby raising $\rho$, the population probability of disclosure. Since $(1-\rho)^m$ is decreasing in $\rho$ for every $m$, this lowers the non-discovery probability of any coalition, and by Proposition~\ref{prop:minimal_coalition} the effect is felt first by the largest, already most fragile coalitions, but a sufficiently large increase in $\rho$ can make even the minimal coalition $n^{\dagger}$ inferior to ordinary Defect, closing off collusive elimination as a rational strategy entirely. Where Proposition~\ref{prop:amnesty} was framed as removing a testimony self-incrimination cost, the identical policy instrument, offered to a conspirator rather than a witness, is a leniency programme in the literal sense of \citet{motta2003} and \citet{spagnolo2004}, rather than merely an analogy to one: it converts a conspiracy's own members into its most likely source of exposure.
\end{remark}

\subsubsection*{What eventual discovery does to the record, and why delay is costly}

The persuasion technology of Section~\ref{sec:elimination} treats an unrevealed $\sigma$ as evidence, at face value, against $W$. A confirmed conspiracy is a different kind of event: it is not merely a discounted or nullified signal, since its very occurrence is informative about why a coalition judged the report worth the risk of concealing it, and, by Assumption~\ref{ass:persuasion_threshold}, that confirmation is generic evidence for the historical record rather than a live reversal of $T$'s already-completed decision.

\begin{assumption}[Conspiracy selection]\label{ass:conspiracy_selection}
  Let $L$ denote the event that a coalition forms and is eventually confirmed, and let $\psi_{\omega} \equiv \Prob(L \mid \omega)$. A confirmed conspiracy is more likely to have been mounted against a genuine report than a vexatious one: $\psi_G > \psi_V$.
\end{assumption}

The asymmetry is not assumed for its own sake: colleagues bear the exposure risk $\Psi_i$ of Proposition~\ref{prop:leak} only when the report is judged worth declining badly enough to justify it, and a report is worth that risk precisely when it is credible and damaging, conditions disproportionately satisfied by genuine reports rather than vexatious ones, which are typically dismissed by the ordinary process without need of an elaborate concealment effort.

Proposition~\ref{prop:ratchet}(iv) showed that a decline known to the deciding assessor as a decline carries likelihood ratio one and does not degrade that assessor's own prior. An investigated and confirmed conspiracy goes further, and does so for the historical record rather than for a decision already taken: it extends an equivalent immunity to every future successor who reads the file, and, because mounting a coalition is costly and disproportionately worth that cost against a genuine report, it is not merely neutral but positively informative for the reporting culture.

\begin{proposition}[Investigated conspiracies repair the record, and delay forgoes the repair]\label{prop:conspiracy_repair}
  Suppose a suspected conspiracy is formally investigated and, if confirmed, the finding that a coalition declined a report through coordinated concealment is entered into the institutional record as such, rather than pooled with ordinary unsubstantiated dispositions. Then:
  \begin{enumerate}[label=(\roman*)]
    \item Such a record carries, for any successor who subsequently reads the file, likelihood ratio
    \begin{align}
      \frac{q_H\,\psi_G + (1-q_H)\,\psi_V}{\,q_L\,\psi_G + (1-q_L)\,\psi_V\,} \;>\; 1
      \label{eq:culture_ratio}
    \end{align}
    on the reporting culture, raising rather than merely failing to lower the recorded prior $p_t$ of Proposition~\ref{prop:ratchet}.
    \item This repair requires no delay relative to the underlying report: unlike the outcome publication of Proposition~\ref{prop:publication}, which depends on the subsequent clinical or regulatory fate of the underlying concern becoming known, the informativeness of a confirmed conspiracy is available as soon as the investigation of the coalition concludes, independent of whether the underlying violation is ever separately substantiated.
    \item The repair is conditional on formal investigation and confirmation, and on that investigation being undertaken promptly. A suspicion that is never formally pursued remains, from a future successor's perspective, indistinguishable from an ordinary unsubstantiated disposition, and is pooled into the coarse record exactly as in Proposition~\ref{prop:ratchet}, forgoing the benefit of (i) and (ii) entirely.
    \item Suppose instead that a suspected coalition is left uninvestigated for $t^*$ periods before confirmation, or is never investigated at all. Three costs compound over the delay, none of which the eventual repair, if it comes, reverses. First, the underlying harm continues to accrue at rate $D_T$ per period for the duration of the delay, a present-value cost of $D_T(1-\delta_T^{t^*})/(1-\delta_T)$ that prompt investigation and validation would have avoided. Second, the record entering investigation at $t^*$ is $p_{t^*}$, already degraded by the factor $(\Lambda\Lambda_\sigma)^{t^*}$ of Proposition~\ref{prop:ratchet}(v) relative to $p_0$; the repair in (i) raises this degraded level, not the level a prompt investigation would have started from, so delay leaves a permanent gap between the record achievable by prompt investigation and that achieved after delay, even conditional on eventual confirmation. Third, because a merely suspected, uninvestigated coalition faces no change to $(b_C, \beta_i)$, nothing in the payoffs deters its members from repeating the same behaviour against a subsequent report during the interval of delay: the incentive structure that produced the first decline remains fully intact for as long as investigation is deferred.
  \end{enumerate}
\end{proposition}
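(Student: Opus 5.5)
The plan is to treat each part separately, since (i) is a Bayesian computation and (ii)--(iv) are largely accounting arguments resting on earlier results.

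For (i), I would compute the likelihood of the confirmed-conspiracy record under each culture $J\in\{H,L\}$ by conditioning on the report's type. Under culture $J$ the report is genuine with probability $q_J$. By Assumption~\ref{ass:conspiracy_selection} the event $L$ then has probability $\psi_G$ if the report is genuine and $\psi_V$ otherwise. Hence $\Prob(L\mid J)=q_J\psi_G+(1-q_J)\psi_V=\psi_V+q_J(\psi_G-\psi_V)$.

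The record is entered as such rather than pooled, so a successor updates on $L$ itself, and the posterior odds on $H$ are multiplied by the ratio \eqref{eq:culture_ratio}. Since $\psi_G>\psi_V$ and $q_H>q_L$, the numerator exceeds the denominator, so the ratio is strictly greater than $1$. Therefore $\rho_t$ rises, and $p_t=q_L+\rho_t(q_H-q_L)$ rises with it. This is to be contrasted with $\Lambda<1$ from Proposition~\ref{prop:ratchet}(i) and the ratio one from Proposition~\ref{prop:ratchet}(iv).

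For (ii), I would observe that the likelihood ratio in (i) depends only on $\psi_\omega$ and $q_J$. It does not involve whether the underlying violation is later substantiated. The update is therefore available as soon as $L$ is recorded, which happens when the coalition investigation concludes. This is unlike publication, whose informativeness is tied to the later revelation of the concern's fate.

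For (iii), I would argue that without formal confirmation the event $L$ is never recorded. The file then shows only an unsubstantiated disposition, plus any $\sigma$. By the construction of the coarse record in Proposition~\ref{prop:ratchet}(i) and (v), the successor applies $\Lambda$ or $\Lambda\Lambda_\sigma$, so the benefit in (i) is lost.

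For (iv), I would take the three costs in turn.

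The first cost is a geometric sum of the harm under continued decline: $\sum_{s=0}^{t^*-1}\delta_T^sD_T=D_T(1-\delta_T^{t^*})/(1-\delta_T)$.

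For the second cost, I would iterate Proposition~\ref{prop:ratchet}(v) for $t^*$ periods, which gives posterior odds $(\Lambda\Lambda_\sigma)^{t^*}$ times the initial odds. Then I would apply the fixed ratio from (i), call it $K$, to both paths. Odds after prompt repair are $K\cdot O_0$, and after delay they are $K(\Lambda\Lambda_\sigma)^{t^*}O_0<K\cdot O_0$. Because $p$ is strictly increasing in the odds, the gap persists: a multiplicative update cannot close a multiplicative deficit.

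For the third cost, I would note that the payoffs \eqref{eq:elim_payoffs} and \eqref{eq:coalition_payoff} change only upon confirmation. Confirmation is what triggers $\Psi_i$ and the record. An uninvestigated coalition's $U_i(n^\dagger)$ is therefore unchanged, so the same comparison that made the coalition optimal in the first period still holds in every later period of the delay.

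I expect the main obstacle to be the second cost in (iv). The claim that the gap is ``permanent'' needs care, since later records update both paths alike. I would handle this by working in odds, where all subsequent updates are common multiplicative factors. The ratio of the two paths' odds, $(\Lambda\Lambda_\sigma)^{-t^*}$, is then preserved forever.
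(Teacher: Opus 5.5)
Your proposal is correct and follows essentially the same route as the paper's proof. For (i) you marginalise $\psi_\omega$ over $q_J$ and sign $(q_H-q_L)(\psi_G-\psi_V)>0$; (ii) and (iii) use the same no-subsequent-event and pooling arguments; and for (iv) you use the geometric present value, the fact that the repair acts on whatever prior is current at confirmation, and the invariance of $U_i(m)$ to a suspicion that is not acted upon. Your odds formulation of the second cost, with the common factor $K$ and the preserved ratio $(\Lambda\Lambda_\sigma)^{-t^*}$, is just a more explicit version of the paper's remark that degradation accumulated before $t^*$ is not clawed back.
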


\begin{proof}
  (i) A successor updates beliefs about the culture type $J \in \{H,L\}$ of Proposition~\ref{prop:ratchet}, not about the type $\omega$ of any single report; $\psi_G/\psi_V$ is the correct likelihood ratio for updating belief about \textit{this report}, but the successor observes only a labelled record, not $\omega$ itself. The relevant likelihood is obtained by marginalising $\psi_\omega$ over the culture-specific genuineness rate: $\Prob(L \mid J) = q_J\,\psi_G + (1-q_J)\,\psi_V$. The likelihood ratio on $J$ is therefore \eqref{eq:culture_ratio}, which exceeds one because $(q_H - q_L)(\psi_G - \psi_V) > 0$: both factors are positive, the first by $q_H > q_L$ (Proposition~\ref{prop:ratchet}) and the second by Assumption~\ref{ass:conspiracy_selection}. This raises the recorded prior for any reader of the record, not only for the assessor present at the time. (ii) follows because \eqref{eq:culture_ratio} is a property of the conspiracy's occurrence and requires no subsequent event to be observed, in contrast to the $\kappa$-type audit underlying Proposition~\ref{prop:publication}(i). (iii) follows because an unconfirmed suspicion lacks the positive disposition label that (i) requires; absent formal investigation, a successor cannot distinguish it from the ordinary case of Proposition~\ref{prop:ratchet}, where the file records outcome without process. (iv): the harm accrual is the standard geometric present value of $D_T$ over $t^*$ periods; the record degradation is Proposition~\ref{prop:ratchet}(ii),(v) applied for $t^*$ periods before the repair of (i) is realised, and since that repair acts on whatever prior is current at the time of confirmation, the degradation accumulated before $t^*$ is not clawed back; the incentive-structure claim follows because $U_i(m)$ in \eqref{eq:coalition_payoff} depends on $(b_C,\beta_i,\Psi_i,\rho)$ alone, none of which is altered by suspicion that is not acted upon.
\end{proof}

\begin{remark}
  Proposition~\ref{prop:conspiracy_repair} is the formal basis for a specific policy priority, developed in Section~\ref{sec:policy_collusion}: a suspected coalition should be investigated, confirmed, and its members held to account promptly and as a priority, not folded into, or made to wait on, an investigation of the underlying report. The cost of failing to do so is not merely a foregone opportunity; part (iv) shows it compounds on three separate margins simultaneously, and none of the three is undone by however thorough the eventual investigation turns out to be. The deleted messages and coordinated concealment established by the tribunal in \citet{gilby2025}, established only years after the underlying events, illustrate both the value of eventual confirmation and its limits: the finding of \textit{Project Countess} as an organised operation is, by Assumption~\ref{ass:conspiracy_selection}, itself evidence that Dr Gilby's original disclosures had merit, and the tribunal's institutional finding is precisely the kind of positively labelled record that part (i) identifies as repairing the credibility of future reports at that trust; but nothing in that eventual vindication restores the years over which the record was left degraded and the underlying conduct unaddressed.
\end{remark}

\begin{remark}
  Two scope limitations are noted. First, the ability of a subset of colleagues to identify one another as willing conspirators and to communicate securely is treated as exogenous when it occurs; nothing here endogenises coalition formation itself, which would require a further layer of matching or network structure. Second, \citet{gilby2025} is a case in which the colluding core sits inside Trust management's own hierarchy rather than among the whistleblower's peers, a configuration the present two-player split between colleagues and $T$ does not cleanly represent; the same coalition apparatus plausibly extends to a colluding sub-group of $T$, but doing so properly requires the hierarchical treatment of management flagged in Section~\ref{sec:limitations}, not a relabelling of the results above.
\end{remark}

\section{Equilibrium characterisation}

The full game is characterised by the support cutoff $\thetastar$ together with, for each non-supporting colleague, an elimination threshold $\gamma_i^*$: colleagues with $\theta_i \geq \thetastar$ support; a colleague with $\theta_i < \thetastar$ defects if $\gamma < \gamma_i^*$ (or if \eqref{eq:elim_dominated} holds for them) and eliminates if $\gamma \geq \gamma_i^*$. From here on, wherever the characterisation is stated at the level of the population rather than the individual colleague, parameters $(\beta_i, \mu_i)$ are taken as common across non-supporters, so that $\gamma_i^*$ reduces to a single $\gamma^*$ and \eqref{eq:elim_dominated} to a single condition; this representative-colleague convention is used for all subsequent aggregate statements, while individual heterogeneity in $(\beta_i,\mu_i)$ is retained wherever the individual decision rule itself is at issue, as in Proposition~\ref{prop:elim_threshold} and Remark~\ref{prop:fragility} above.

\begin{definition}[Stage game equilibria]\label{def:equilibria}
  Four equilibrium outcomes are distinguished:
  \begin{enumerate}[label=(\roman*)]
    \item \textit{Support:} a report is made; $\shat \geq \sstar$; validation occurs.
    \item \textit{Decline:} a report is made; $\shat < \sstar$; non-supporters defect ($\gamma < \gamma^*$); decline occurs.
    \item \textit{Elimination:} a report is made; non-supporters eliminate ($\gamma \geq \gamma^*$ and \eqref{eq:elim_dominated} fails); $\sigma > 0$ drives $p(\shat,\sigma)$ below the validation level; decline occurs.
    \item \textit{Silence:} no report is made; \eqref{eq:participation} fails.
  \end{enumerate}
\end{definition}

\begin{theorem}[Existence and conditions]\label{thm:equil}
  Under Assumptions~\ref{ass:params} and~\ref{ass:dominance}:
  \begin{enumerate}[label=(\roman*)]
    \item Support obtains iff $\pi_W \geq \pi_W^*$, $\shat(\theta,\thetastar) \geq \sstar$, and either $\gamma < \gamma^*$, or \eqref{eq:elim_dominated} holds, or the residual signal $\sigma$ is insufficient to reverse \eqref{eq:suppress_cond}.
    \item Decline obtains iff $\pi_W \geq \pi_W^*$, $\shat < \sstar$, and $\gamma < \gamma^*$ or \eqref{eq:elim_dominated} holds.
    \item Elimination obtains iff $\pi_W \geq \pi_W^*$, $\gamma \geq \gamma^*$, \eqref{eq:elim_dominated} fails, and $\sigma$ suffices to hold $p(\shat,\sigma)$ below the validation level.
    \item Silence obtains iff $\pi_W < \pi_W^*$; in particular whenever decline or elimination would follow reporting with certainty and $P_W + c_W > 0$.
    \item Under assessor turnover, repeated decline drives $p_t \downarrow q_L$ (Proposition~\ref{prop:ratchet}), strictly faster under elimination (Proposition~\ref{prop:ratchet}(v)); if $q_L < p^{\min}$, only outcomes (ii)--(iv) are eventually reachable.
  \end{enumerate}
\end{theorem}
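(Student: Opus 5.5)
The proof assembles results already established, so I would go through the stage game backward: first $T$'s decision, then the colleagues' choices, then $W$'s reporting choice. Every case is obtained by combining the relevant conditions.

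I would begin by fixing the population-level convention of the equilibrium characterisation: common $(\beta,\mu)$, a single $\gamma^*$, and a single kill condition \eqref{eq:elim_dominated}.

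At the last stage, $T$ validates iff \eqref{eq:suppress_cond} fails at the posterior $p(\shat,\sigma)$. When $\sigma=0$, Lemma~\ref{lemma:val_prob} reduces this to $\shat\geq\sstar$. When $\sigma>0$, the MLRP in $\sigma$ (Assumption~\ref{ass:params}(iv), via \eqref{eq:posterior}) lowers the posterior. Validation then occurs iff $\shat$ clears a higher threshold, and in particular it can fail even when $\shat\geq\sstar$.

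At the colleague stage, Theorem~\ref{thm:unique} fixes the support cutoff $\thetastar$, and hence $\shat(\theta,\thetastar)$. For non-supporters, Proposition~\ref{prop:elim_threshold} gives the split. They defect if \eqref{eq:elim_dominated} holds or $\gamma<\gamma^*$, in which case $\sigma=0$. Otherwise they eliminate and $\sigma>0$.

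I would then partition on the non-supporters' action.

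\textbf{Non-supporters defect.} Here $\sigma=0$, and the outcome is Support iff $\shat\geq\sstar$ and Decline iff $\shat<\sstar$. This gives (ii) and the first two disjuncts of (i).

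\textbf{Non-supporters eliminate.} The outcome is Elimination iff $\sigma$ pushes $p(\shat,\sigma)$ below the validation level, which is (iii). If $\sigma$ is insufficient and $\shat\geq\sstar$, validation still occurs, which is the third disjunct of (i).

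Prepending $W$'s decision via Proposition~\ref{prop:participation} supplies the requirement $\pi_W\geq\pi_W^*$ in (i)--(iii) and its negation in (iv). The ``in particular'' clause of (iv) is the certain-decline case of that proposition, with $\pi_W=0<\pi_W^*$ whenever $P_W+c_W>0$.

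Part (v) is Proposition~\ref{prop:ratchet}(ii), (iii) and (v) restated. Once $p_t<p^{\min}$, the $\sstar>1$ convention makes $\shat\geq\sstar$ impossible, so outcome (i) is excluded from some $t^*$ onward. Outcomes (ii)--(iv) are then the only ones left.

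The main obstacle is making the ``iff'' claims exact at the boundaries rather than loose. Two points need care.

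First, the cases must be mutually exclusive and exhaustive. There is a gap: if non-supporters eliminate, $\sigma$ is insufficient, but $\shat<\sstar$, the outcome is decline with $\sigma>0$. None of (ii) or (iii) literally covers this. I would handle it by reading (iii) as ``decline with $\sigma>0$'', since decline then holds $p$ below the validation level anyway, so the case falls under Elimination.

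Second, $\pi_W$ is itself an equilibrium object that depends on $\theta$, $\gamma$ and $\sigma$. I would treat it as $W$'s belief induced by the continuation equilibrium, which is well defined because the colleague and $T$ stages have essentially unique outcomes (Theorem~\ref{thm:unique} and Lemma~\ref{lemma:val_prob}). Knife-edge equalities such as $\gamma=\gamma^*$ or $\pi_W=\pi_W^*$ I would resolve with the tie-breaking conventions the statements already use.
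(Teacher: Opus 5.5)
Your proposal is correct and takes essentially the paper's route: the paper's proof is likewise a pure assembly of Proposition~\ref{prop:participation}, Theorem~\ref{thm:unique} with Lemma~\ref{lemma:val_prob}, and Proposition~\ref{prop:elim_threshold}, whose interior threshold and kill condition give the separate disjuncts in (i)--(ii); it uses \eqref{eq:posterior} with \eqref{eq:suppress_cond} for the effect of $\sigma$, and the ratchet results with $p^{\min}$ for (v). Your explicit partition on the non-supporters' action, and your reading of (iii) as covering elimination with $\shat<\sstar$ (where $p(\shat,\sigma)<p(\shat)$ already lies below the validation level), make explicit the boundary bookkeeping the paper leaves implicit.
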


\begin{proof}
  Each part assembles the corresponding threshold conditions: Proposition~\ref{prop:participation} for participation, Theorem~\ref{thm:unique} and Lemma~\ref{lemma:val_prob} for the support margin, and Proposition~\ref{prop:elim_threshold} for the elimination margin, which supplies both the interior threshold $\gamma^*$ and the dominance condition \eqref{eq:elim_dominated}; either failing to hold is sufficient, independently, for elimination not to occur, which is why both appear as separate disjuncts in (i) and (ii). \eqref{eq:posterior} together with \eqref{eq:suppress_cond} gives the effect of $\sigma$ on $T$'s decision. Part (v) follows from the two ratchet propositions and the definition of $p^{\min}$.
\end{proof}

The parameters below are not econometrically estimated; data adequate for calibrating them against actual NHS trusts do not exist in usable form. What follows is a plausibility argument rather than a measurement: a parameter configuration is exhibited that is qualitatively consistent with institutional features repeatedly documented in NHS inquiries, short executive tenure, weak external scrutiny, systemic rather than individual violations, and strong hierarchical control over colleagues' careers, and it is shown that this configuration alone is sufficient to generate the failure pattern. Why the NHS specifically is used as the evidentiary setting for this configuration, rather than a claim that its parameters are unusually extreme, is addressed directly in Section~\ref{sec:inquiry_cycle}.

\begin{corollary}[Generic failure under a parameter configuration plausibly consistent with recurrent NHS failures]\label{cor:nhs_failure}
  Suppose, illustratively, that $\delta_T$ is low (short management tenure), $\lambda$ is low (weak enforcement visibility), $\phi$ is high (systemic violations investigated collectively), and $b_C$ is high (hierarchical career control), and suppose additionally that elimination is rewarded ($\beta_i$ large where $b_C$ is large). Then for all but the highest-severity violations the outcome is Decline, Elimination, or Silence.
\end{corollary}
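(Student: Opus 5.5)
The plan is to show that, under the stated configuration, the Support outcome of Definition~\ref{def:equilibria}(i) fails for every $\theta$ below some cutoff close to $\theta_H$. Theorem~\ref{thm:equil} then leaves only Decline, Elimination, or Silence. Since Theorem~\ref{thm:equil}(i) makes Support require $\shat(\theta,\thetastar)\geq\sstar$, it suffices to show that $\sstar$ is pushed up and $\thetastar$ is pushed up, so that $\shat=(\theta+\varepsilon-\thetastar)/2\varepsilon$ falls short of $\sstar$ except when $\theta$ is near $\theta_H$.

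The argument has four steps.

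\begin{enumerate}[label=\arabic*.]
\item \emph{Management side.} Use Remark~\ref{rem:tenure} and Lemma~\ref{lemma:val_prob}. As $\delta_T\to0$ and $\lambda\to0$, the terms $\delta_T^{\ell}\lambda(L_T+\Phi\Omega_T)$ and $\delta_T D_T/(1-\delta_T)$ vanish, so (\ref{eq:sstar_def}) reduces to $p(\sstar)R_T=C_T$. Because $pR_T<C_T$ by Assumption~\ref{ass:params}(iii), $\sstar$ is bounded strictly above the level at which $p(\shat)=p$. If $R_T\, p(1)<C_T$, then $\sstar>1$ and decline is unconditional, so the result is immediate. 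Otherwise $\sstar$ is interior and bounded away from $0$.

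\item \emph{Colleague side.} Use Corollary~\ref{cor:theta_star_comparative}. High $\phi$ raises $\thetastar$ through $\phi H_C/B_C$. High $b_C$ and low $\lambda$ both raise $\DD$, so they raise $\thetastar$ through $\frac{\sstar}{1-\sstar}\DD/B_C$. The indirect channel via the raised $\sstar$ from step~1 amplifies this further.

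\item \emph{Invoke Theorem~\ref{thm:phi_dominance}.} For $\phi>\bar\phi(b_C)$ we have $\shat=0$ and decline follows. Since $\bar\phi$ is decreasing in $b_C$, and since $\sstar/(1-\sstar)$ is large by step~1, the set of $\theta$ with $\bar\phi(b_C)<\phi$ is exactly $\{\theta:\theta+\varepsilon\le\thetastar\}$. Under the configuration this set covers all $\theta$ up to $\thetastar-\varepsilon$, which I will argue lies near $\theta_H$. For $\theta$ in that range Support fails. Elimination (with $\beta_i$ large and $\lambda$ small, so (\ref{eq:elim_dominated}) fails and $\gamma^*$ of (\ref{eq:gamma_star}) is low) only enlarges the failure set through $\sigma$, by (\ref{eq:posterior}).

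\item \emph{Classify the residual outcome.} Among Decline, Elimination, and Silence, the outcome is Silence whenever $W$ anticipates certain decline, by Proposition~\ref{prop:participation}. The remaining cases fall under Theorem~\ref{thm:equil}(ii)/(iii).
\end{enumerate}

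The main obstacle is quantifying ``all but the highest-severity violations'' without specific numbers. The statement is qualitative, so I would make it precise as a limit claim. Fix $\theta<\theta_H$, and let $\delta_T,\lambda\to0$ with $\phi,b_C$ sufficiently large. Then $\thetastar\ge\theta+\varepsilon$ eventually, because either $\sstar>1$, or the term $\frac{\sstar}{1-\sstar}\DD$ grows without bound in $b_C$.

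Consistency with Assumption~\ref{ass:dominance}(i) needs a check. For $\theta\ge\theta_H$, validation is external, so Support persists there regardless. This is exactly the carve-out for the highest severities, and it means the threshold-equilibrium formula need not be applied in that region.
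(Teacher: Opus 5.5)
Your proposal is correct and follows essentially the paper's own argument. Low $\delta_T,\lambda$ push $\sstar$ up via Remark~\ref{rem:tenure} and Lemma~\ref{lemma:val_prob}; high $\phi$ and $b_C$ push $\thetastar$ past $\theta+\varepsilon$, so Theorem~\ref{thm:phi_dominance} gives $\shat=0$ and decline; large $\beta_i$ with small $\lambda$ defeats \eqref{eq:elim_dominated}; Proposition~\ref{prop:participation} turns certain decline into silence; and Assumption~\ref{ass:dominance}(i) supplies the carve-out. Your limit formalisation (fix $\theta<\theta_H$, bound $\sstar$ below by the point where $p(\shat)=p$, and let $b_C\to\infty$) tightens the paper's looser ``$\sstar$ is high or at its boundary,'' but step~1 only shows that $\sstar$ is bounded away from $0$, not that $\sstar/(1-\sstar)$ is large, so it is the growth in $b_C$ that does the work.
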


\begin{proof}
  Low $\delta_T$ and low $\lambda$ drive the right-hand side of \eqref{eq:suppress_cond} toward $p\,R_T < C_T$ (Remark~\ref{rem:tenure}, Assumption~\ref{ass:params}(iii)), so $\sstar$ is high or at its boundary. High $\phi$ and $b_C$ raise $\thetastar$ (Corollary~\ref{cor:theta_star_comparative}(i)); for $\phi > \bar{\phi}(b_C)$, $\shat = 0$ (Theorem~\ref{thm:phi_dominance}). Low $\lambda$ lowers $\gamma^*$ \eqref{eq:gamma_star} and moves the environment away from \eqref{eq:elim_dominated}; with $\beta_i$ large, elimination is individually rational for moderate $\gamma$. Certain decline plus $P_W + c_W > 0$ yields silence via Proposition~\ref{prop:participation}. Assumption~\ref{ass:dominance}(i) exempts only $\theta \geq \theta_H$.
\end{proof}

\section{Higher-order beliefs and robustness}

\subsection{Common knowledge discontinuities}

\begin{proposition}[Common knowledge discontinuity]\label{prop:ck}
  \begin{enumerate}[label=(\roman*)]
    \item If the inequality
    \begin{align}
      \delta_C^{\ell}\lambda\HD \;\geq\; k + R_C + b_C + \frac{1-\sstar}{\sstar}\,\big(k + \phi H_C\big)
      \label{eq:ck_support}
    \end{align}
    becomes common knowledge, then $\thetastar \leq 0$ and Support is dominant for all colleagues: the defect equilibrium collapses to support regardless of history.
    \item If the inequality $\delta_C^{\ell}\lambda\big(\HE - \HD\big) \geq \beta_i$ becomes common knowledge, then Eliminate is strictly dominated for all $\gamma$ and the elimination equilibrium is destroyed.
  \end{enumerate}
\end{proposition}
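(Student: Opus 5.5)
Part (i) is a direct inversion of the closed-form cutoff \eqref{eq:theta_star} of Theorem~\ref{thm:unique}. For interior $\sstar$, multiplying \eqref{eq:theta_star} by $B_C(1-\sstar)>0$ shows that $\thetastar\le 0$ is equivalent to
\[
(1-\sstar)(k+\phi H_C)+\sstar\,\DD\le 0 .
\]
Dividing by $\sstar>0$, this becomes $\DD\le -\frac{1-\sstar}{\sstar}(k+\phi H_C)$. Substituting the definition \eqref{eq:DeltaD} of $\DD$ and moving $\delta_C^{\ell}\lambda\HD$ to the left gives exactly \eqref{eq:ck_support}. So the algebraic half of (i) is a single line, which I would present forwards, from \eqref{eq:ck_support} to $\thetastar\le 0$.

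The substantive step is to interpret $\thetastar\le 0$ as "Support for all colleagues." Since $F$ is supported on $\R_{+}$, the threshold rule "Support iff $\theta_i\ge\thetastar$" then prescribes Support at every realised severity, up to the $\varepsilon$-boundary. That gives $\shat=1\ge\sstar$ and hence validation by Lemma~\ref{lemma:val_prob}. The common-knowledge clause is what licenses this. Once \eqref{eq:ck_support} is public, every colleague knows that every other colleague computes the same non-positive cutoff, so there is no higher-order doubt about others' play, and the best response to universal Support is Support.

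I expect this step to be the main obstacle. Condition \eqref{eq:ck_support} forces $\DD\le 0$, which violates Assumption~\ref{ass:params}(ii), and that assumption was exactly what created the lower dominance region in the proof of Theorem~\ref{thm:unique}. So I cannot simply cite uniqueness. My first attempt would be to redo the comparison directly: the Support-minus-Defect payoff is
\[
\pi\,(B_C\theta_i-k-\phi H_C)-(1-\pi)\,\DD ,
\]
and with $\DD\le 0$ the second term is non-negative. Defect therefore loses its low-signal dominance region, and the iterated-deletion argument proceeds inward from the Support side alone. Where $B_C\theta_i<k+\phi H_C$ the first term is negative at $\pi$ near one, so pure dominance can fail for tiny $\theta_i$. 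In that case I would state "dominant" in the sense of the unique surviving cutoff lying at or below the support of $F$, and flag explicitly that this is the precise meaning being used.

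Part (ii) is a restatement of \eqref{eq:elim_dominated} from Proposition~\ref{prop:elim_threshold}. By \eqref{eq:elim_diff}, the bracket $\beta_i-\delta_C^{\ell}\lambda(\HE-\HD)$ is non-positive under the hypothesis, so $\E[\mathrm{Elim}]-\E[\mathrm{Defect}]\le 0$ for every $\gamma\in[0,1]$. The inequality is strict when the bracket is strictly negative, or when the bracket is zero and $\mu_i>0$ with $\gamma<1$. I would note that "strictly dominated for all $\gamma$" needs either the strict inequality or $\mu_i>0$; in the boundary case $\beta_i=\delta_C^{\ell}\lambda(\HE-\HD)$ with $\mu_i=0$, the domination is only weak. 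Common knowledge then does the aggregate work. Every colleague knows that no one eliminates, so $n_E=0$ and $\sigma=0$, and the game reduces to the Support/Defect subgame of Section~3. Under the representative-colleague convention, outcome (iii) of Definition~\ref{def:equilibria} is thereby excluded.
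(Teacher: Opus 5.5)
Your algebra for (i) is the same as the paper's (Appendix~B multiplies by $(1-\sstar)B_C$, divides by $\sstar$ and substitutes \eqref{eq:DeltaD}), and your part (ii) likewise reads the result off \eqref{eq:elim_diff}. Your boundary caveat in (ii) is correct: at $\beta_i=\delta_C^{\ell}\lambda(\HE-\HD)$ with $\mu_i=0$ the payoff difference is identically zero. Strictness at every $\gamma\in[0,1]$ in fact needs both a strictly negative bracket and $\mu_i>0$; otherwise there is a tie at $\gamma=0$ or at $\gamma=1$.

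The gap is in the step you single out yourself, and your fallback does not close it. You are right that $\thetastar\le 0$ is only the indifference point at the Laplacian belief $\pi=1-\sstar$, not a dominance bound. The paper's proof makes exactly the leap you distrust: ``every signal $\theta_i\ge 0>\thetastar$ lies above the cutoff and Support is strictly dominant; one round of deletion suffices''. But iterated deletion ``from the Support side alone'' cannot deliver a surviving cutoff at or below zero either.

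Once $\DD\le 0$, the Support-minus-Defect difference $\pi(B_C\theta_i-k-\phi H_C)-(1-\pi)\DD$ has slope $B_C\theta_i-k-\phi H_C+\DD<0$ in $\pi$ for every $\theta_i<(k+\phi H_C)/B_C$. So the strategic complementarity that drives inward deletion fails exactly in the region you need to cross. Worse, your sentence ``the best response to universal Support is Support'' is false there. If all others support, then $\shat=1\ge\sstar$, so a colleague with $\theta_i\in(\theta_L+\varepsilon,(k+\phi H_C)/B_C)$ is certain of validation. That colleague strictly prefers Defect, since it pays $0$ against $B_C\theta_i-k-\phi H_C<0$ for Support. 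Universal Support is therefore not even an equilibrium. Assumption~\ref{ass:dominance} only imposes $B_C\theta_H>k+\phi H_C$, so this interval can be nonempty. When it is, the conclusion of (i) fails under your weakened ``unique surviving cutoff'' reading as well as the paper's literal one.

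What \eqref{eq:ck_support} actually buys is $\DD<0$, strictly, because $\sstar<1$ and $k>0$. That makes Support strictly better wherever $\pi=0$, which covers $\theta_i<\theta_L-\varepsilon$. It also makes Support weakly dominant against every $\pi$ wherever $B_C\theta_i\ge k+\phi H_C$. To obtain ``Support for all colleagues'' you must add a hypothesis such as $B_C(\theta_L-\varepsilon)\ge k+\phi H_C$, or else restrict the claim to signals $\theta_i\ge(k+\phi H_C)/B_C$. No reinterpretation of ``dominant'' rescues the statement as written.
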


\begin{proof}
  (i) Rearranging \eqref{eq:theta_star}, $\thetastar \leq 0$ iff $k + \phi H_C + \frac{\sstar}{1-\sstar}\DD \leq 0$ iff \eqref{eq:ck_support}. Since $\theta_i \geq 0$ always, every signal exceeds the cutoff and Support is dominant; iterated deletion is immediate. (ii) is condition \eqref{eq:elim_dominated} of Proposition~\ref{prop:elim_threshold}. In both cases, the discontinuity requires common knowledge: under merely private belief in the inequality, the global games threshold logic continues to apply and the cutoffs may remain interior, consistent with \citet{rubinstein1989}.
\end{proof}

\begin{remark}
  The two parts carry different policy weights. Part (i) is demanding: it requires the discounted defector exposure liability to exceed the full defection premium, grossed up by the odds ratio of the validation threshold. Part (ii) is comparatively cheap: only the \textit{difference} in exposure harms must exceed the \textit{incremental} elimination benefit. The realistic sequence of institutional repair is therefore: first destroy the elimination equilibrium (part (ii)), reverting the system to passive decline; then rebuild the support margin through the continuous channels of Corollary~\ref{cor:theta_star_comparative}.
\end{remark}

\subsection{Signalling contamination under elimination}

When elimination occurs, the informational content of $\shat$ is contaminated by the persuasion signal $\sigma$. Both are observed by $T$, but the truthful component cannot be perfectly separated from the strategic component, and by \eqref{eq:posterior}, $p(\shat, \sigma) < p(\shat)$ for any $\shat$ whenever $\sigma > 0$. A rational $T$ aware of the elimination technology would discount $\sigma$; the discount is bounded, however, by $T$'s inability to verify which documents are strategic, and a well-constructed, internally consistent campaign defeats the discount in practice. $T$'s effective validation threshold is consequently higher under elimination than under passive defection, so decline is more likely at any given level of genuine collegial belief in the report.

\subsection{Informational cascades and elimination initiation}

In practice, colleague actions are sequential rather than simultaneous. A signal is provided to all others by the first colleague to file a complaint against $W$: beliefs about management commitment are updated by the response the complaint receives. If a favourable response is made by management, by acknowledging the complaint, initiating a process, or declining to protect $W$, the campaign success probability $\gamma$ is raised for all remaining colleagues, potentially above $\gamma^*$, triggering a cascade of elimination.

Two implications follow. First, the identity of the first eliminator matters: a cascade is far more effectively triggered by a senior, credible colleague whose complaint is taken seriously than by a junior one. Second, management's response to the first complaint is itself strategic: an uncommitted management wishing to appear neutral while benefiting from the campaign can respond procedurally, without explicit endorsement, and thereby signal commitment to observing colleagues. The asymmetry noted for passive defection is amplified here: an early defection or complaint is consistent with any private signal and is therefore safe, whereas early support reveals $\theta_i \geq \thetastar$; downward cascades are correspondingly easier to start than upward ones.

\section{Policy implications}\label{sec:policy}

\subsection{The binding constraints}

By Theorem~\ref{thm:phi_dominance}, the pair $(\phi, b_C)$ is the binding constraint on the support margin, and neither $P_W$ nor $L_T$ reaches it: the former structurally, the latter because it is gated by $\lambda$. By Proposition~\ref{prop:elim_threshold}, the elimination margin is governed by $(\beta_i, \mu_i, \lambda, \HE - \HD)$, none of which is targeted by any existing intervention.

Colleague action against a report is not one phenomenon but two, and which of the two is operating determines which instrument reaches it most directly. Acting \textit{independently} (Section~\ref{sec:elimination}), each colleague weighs an uncertain, commonly held belief $\gamma$ about aggregate campaign success against an individual threshold $\gamma_i^*$: success is probabilistic, exposure on failure is the ordinary elimination harm $\HE$, and the margin is closed by raising the credibility of exposure until Proposition~\ref{prop:elim_threshold}'s kill condition makes participation irrational \textit{regardless of confidence}, a population-wide effect reached through $\lambda$. Acting as an \textit{organised coalition} (Section~\ref{sec:collusion}) trades that uncertainty for certainty of success, obtained by commitment rather than aggregation: the decision it targets is not reopened once made, since discovery, by Assumption~\ref{ass:persuasion_threshold}, is generically a post-decision event. What a coalition's members bear instead is a distinct, aggravated exposure $\Psi_i > \HE$ should the coalition eventually be confirmed, and a discovery probability $\rho$ with no counterpart in the decentralised case. Raising $\lambda$ still contributes to the calculation a coalition made when it chose to conceal, but it does not itself trigger or expedite the confirmation on which every consequence in Proposition~\ref{prop:conspiracy_repair} depends. What closes this margin most directly is raising $\rho$ from within the coalition, and, once a coalition is suspected, investigating and confirming it promptly rather than allowing the costs of Proposition~\ref{prop:conspiracy_repair}(iv) to compound. The interventions below are organised accordingly: colleague amnesty and named-trust publication reach the support margin and the independent-action margin through $\phi$ and $\lambda$ respectively, while a further, distinct instrument (Section~\ref{sec:policy_collusion}) reaches organised collusion most directly, through $\rho$ and prompt investigation, with $\lambda$ continuing to contribute at the margin rather than dominating it.

\subsection{Colleague amnesty: enabling self-organisation}

Section~\ref{sec:collusion} showed that the coordination technology behind an adversarial coalition, the ability of colleagues to communicate and commit to a joint course of action, carries no inherent direction: applied to Support, it is colleagues choosing to corroborate together rather than each privately gambling on the others (Remark~\ref{rem:self_organisation}). The obstacle to organising in the open is not the absence of a coordination device but the presence of $\phi H_C$, the self-incrimination cost that makes even privately sympathetic colleagues reluctant to testify, alone or together. Removing that cost is therefore best read as licensing self-organisation among supporting colleagues, not merely as a subsidy to isolated individuals.

\begin{proposition}[Amnesty removes the self-incrimination channel]\label{prop:amnesty}
  Suppose immunity from investigative consequences is granted to colleagues who voluntarily support a report, so that effective $H_C = 0$ for supporters. Then:
  \begin{enumerate}[label=(\roman*)]
    \item The cutoff falls by exactly $\phi H_C / B_C$:
    \begin{align*}
      \thetastar_{\text{amnesty}} = \frac{1}{B_C}\left[k + \frac{\sstar}{1-\sstar}\DD\right] \;=\; \thetastar - \frac{\phi H_C}{B_C},
    \end{align*}
    a strict reduction for all $\phi > 0$.
    \item The complete-decline condition of Theorem~\ref{thm:phi_dominance} ceases to depend on $\phi$ altogether: collective implication is eliminated as a channel of colleague deterrence, whatever the systemicity $\Phi$ of the underlying violation.
    \item The strategic alignment between management and colleagues is broken: $T$ remains exposed through $\Phi\,\Omega_T$, but the differential testimony cost to colleagues is zero, so the colleague margin is governed by $k$, $R_C$, $b_C$, and $\HD$ alone.
  \end{enumerate}
\end{proposition}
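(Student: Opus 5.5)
The plan is to read all three parts off the closed forms already established, substituting $H_C = 0$ in the supporter's payoff row of \eqref{eq:coll_payoffs} and leaving every other primitive unchanged. The first thing to check is that amnesty perturbs nothing except the supporter's payoff. $H_C$ does not enter $T$'s payoffs \eqref{eq:uval}--\eqref{eq:usup}, nor \eqref{eq:sstar_def}, nor $\DD$ in \eqref{eq:DeltaD}. Hence $\sstar$ from Lemma~\ref{lemma:val_prob} and $\DD$ are invariant under the intervention. Likewise $\Phi\,\Omega_T$ is untouched, because $\Phi$ is a property of the violation and not of investigation design.

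For (i), I would confirm that the proof of Theorem~\ref{thm:unique} still goes through with $H_C = 0$. The strategic complementarity and Laplacian-belief steps do not involve $H_C$. The upper dominance region needs $B_C\theta_H > k$, and this is implied by Assumption~\ref{ass:dominance}(i) since $\phi H_C \ge 0$. The lower region depends only on $\DD > 0$. The indifference equation then yields \eqref{eq:theta_star} with $\phi H_C$ deleted. Subtracting the two expressions, with $\sstar$ and $\DD$ held fixed by the preceding observation, gives the difference $\phi H_C/B_C$. This is strictly positive whenever $\phi > 0$, provided $H_C > 0$, which I would note as an implicit standing requirement; if $H_C = 0$ the reduction is trivially zero.

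For (ii), I would repeat the derivation in the proof of Theorem~\ref{thm:phi_dominance}: $\shat = 0$ iff $\thetastar_{\text{amnesty}} \ge \theta + \varepsilon$. This condition reads $B_C(\theta+\varepsilon) \le k + \frac{\sstar}{1-\sstar}\DD$, in which $\phi$ does not appear. So the threshold $\bar{\phi}$ is no longer defined, or equivalently complete defection is determined independently of $\phi$. Because $\sstar$ depends on $\Phi$ through \eqref{eq:sstar_def}, I would phrase the claim carefully: $\phi$ drops out, and the dependence on $\Phi$ runs only through $T$'s threshold, not through any testimony cost. That reading is what ``whatever the systemicity'' refers to.

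For (iii), the argument is interpretive and built on (i). Expanding $\DD$ shows that $\thetastar_{\text{amnesty}}$ depends on $k, R_C, b_C, \HD$ (together with $\delta_C^\ell\lambda$, $B_C$ and $\sstar$). Meanwhile $T$'s condition \eqref{eq:suppress_cond} still contains $\Phi\,\Omega_T$. So the shared dependence on systemicity, $\Phi$ for $T$ and $\phi \le \Phi$ for colleagues, which previously pushed both toward non-validation, is severed on the colleague side.

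The main obstacle is not any calculation. It is stating precisely what ``ceases to depend on $\phi$'' means in (ii) given that $\sstar$ still varies with $\Phi$. I would handle it by distinguishing the two parameters explicitly, as the model already does after \eqref{eq:usup}.
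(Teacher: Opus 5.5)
Your proposal is correct and takes the same route as the paper, whose proof simply sets $H_C = 0$ in \eqref{eq:theta_star} and \eqref{eq:phibar} with $\sstar$ and $\DD$ held fixed. Your extra checks are sound refinements rather than a different argument: the invariance of $\sstar$ and $\DD$, the survival of the dominance regions, the caveat that strictness needs $H_C > 0$, and working from $\shat = 0 \iff \thetastar_{\text{amnesty}} \ge \theta + \varepsilon$ instead of substituting into \eqref{eq:phibar}, which would divide by $H_C$.
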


\begin{proof}
  All three parts follow from setting $H_C = 0$ in \eqref{eq:theta_star} and \eqref{eq:phibar}; the reduction in (i) is exact because $\phi H_C$ enters \eqref{eq:theta_star} additively in the numerator.
\end{proof}

\begin{remark}
  A direct precedent is found in competition law leniency programmes \citep{motta2003,spagnolo2004}, which grant immunity to the first cartel member to cooperate with an investigation and are credited with substantially increasing detection by transforming a game of mutual silence into a race to cooperate. The formal structure is the same: cooperation is individually deterred by self-incrimination, and removing that cost reverses the private calculus. No provision of this kind exists anywhere in NHS governance.
\end{remark}

\subsection{Mandatory named-trust outcome publication}

\begin{assumption}[Differential attributability]\label{ass:differential}
  Publication of whistleblowing outcomes raises the exposure harms of documented participants strictly more than those of passive bystanders: the induced increases satisfy $\Delta \HE > \Delta \HD \geq 0$. This reflects that elimination generates an attributable documentary record, whereas silence does not.
\end{assumption}

\begin{proposition}[Named-trust publication]\label{prop:publication}
  Mandatory public reporting of whistleblowing outcomes at the level of named trusts, recording the disposition of each report and its subsequent clinical and regulatory status, has the following effects:
  \begin{enumerate}[label=(\roman*)]
    \item The record is repaired at its informational root. Publication of each disposition together with the subsequent clinical and regulatory status of the underlying concern reveals process: an unsubstantiated disposition followed by confirmed harm is exposed as a decline of a genuine report, and one followed by a clean record is consistent with genuine investigation. The coarseness on which Proposition~\ref{prop:ratchet} depends, the pooling of non-investigation with investigation in the file, is thereby removed, and the mechanism of Proposition~\ref{prop:ratchet}(iv) is extended from the deciding assessor to every reader of the published record.
    \item Effective $\lambda$ is raised: a standing public record lowers the cost of ex-post connection between decline and subsequent harm, so exposure no longer requires a dedicated inquiry. By Lemma~\ref{lemma:val_prob} and Remark~\ref{rem:LT_gated}, this simultaneously lowers $\sstar$ and un-gates $L_T$.
    \item Under Assumption~\ref{ass:differential}, $\gamma^*$ rises and the environment is pushed toward the kill condition \eqref{eq:elim_dominated}; when $\delta_C^{\ell}\lambda_{\mathrm{eff}}\big(\HE - \HD\big) \geq \beta_i$ is reached and rendered publicly observable, Eliminate becomes strictly dominated (Proposition~\ref{prop:ck}(ii)) at administrative rather than inquiry cost.
    \item The deterrent effect is concentrated on elimination rather than passive defection (Remark~\ref{prop:fragility} with Assumption~\ref{ass:differential}).
  \end{enumerate}
\end{proposition}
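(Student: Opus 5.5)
The plan is to prove the four parts in order. Each part reduces to a result already established, once publication is translated into changes of the primitives: a finer record for (i), a larger effective $\lambda$ for (ii), and the increments $\Delta\HE > \Delta\HD$ for (iii)--(iv).

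For (i), I would model the published record as revealing a second signal attached to each unsubstantiated disposition: the later clinical or regulatory status of the concern, either confirmed harm or a clean record. Write $\kappa$ for the probability that a genuine but declined concern is later revealed as harmful. Then a disposition followed by confirmed harm identifies a decline of a genuine report. A disposition followed by a clean record has likelihood under culture $J$ that no longer pools investigation with non-investigation in the way that produced $\Lambda$. In the limiting case of full revelation, each published disposition is known, to every reader, to be either an investigation or a decline. For a known decline the argument of Proposition~\ref{prop:ratchet}(iv) applies verbatim: the disposition has probability one under either culture, so its likelihood ratio is unity. This gives the extension of (iv) from the deciding assessor to every reader. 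For partial revelation I would show only that the ratio moves from $\Lambda$ toward one. This is the step I expect to be the main obstacle. The statement is qualitative, and a clean monotonicity claim depends on how $\kappa$ enters. I would first try parametrising the informativeness by $\kappa$ and checking that the pooled ratio $(1-\iota q_H - (1-\iota)\kappa q_H)/(1-\iota q_L-(1-\iota)\kappa q_L)$ is handled correctly conditional on the clean outcome. If that becomes messy, I would state (i) for the full-revelation limit only.

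For (ii), I would take publication to raise the per-period exposure probability from $\lambda$ to some $\lambda_{\mathrm{eff}} > \lambda$. This is the modelling content of ``lowering the cost of ex-post connection.'' Then the conclusion is immediate from Lemma~\ref{lemma:val_prob}: $\sstar$ is decreasing in $\lambda$. The un-gating claim follows from the implicit derivative $\partial\sstar/\partial L_T = -\delta_T^{\ell}\lambda/(p'(\sstar)R_T)$ used in Theorem~\ref{thm:phi_dominance}(iii), whose magnitude increases with $\lambda$, as Remark~\ref{rem:LT_gated} notes.

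For (iii), I would substitute $\lambda_{\mathrm{eff}}$ and the shifted harms $\HE+\Delta\HE$ and $\HD+\Delta\HD$ into \eqref{eq:gamma_star}. The exposure premium becomes $\delta_C^{\ell}\lambda_{\mathrm{eff}}(\HE-\HD+\Delta\HE-\Delta\HD)$. This is strictly larger than before, because Assumption~\ref{ass:differential} makes the premium difference grow and $\lambda_{\mathrm{eff}}>\lambda$. By the comparative statics of Proposition~\ref{prop:elim_threshold}, $\gamma_i^*$ rises, and the left side of \eqref{eq:elim_dominated} grows toward $\beta_i$. Once it reaches $\beta_i$ and is publicly observable, I would treat publication of a named standing record as the device that produces common knowledge. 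Proposition~\ref{prop:ck}(ii) then gives strict dominance.

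For (iv), I would compare the two margins as in Remark~\ref{prop:fragility}. The defection margin responds only through $\DD$ via $\delta_C^{\ell}\lambda\HD$, and continuously, as in Corollary~\ref{cor:theta_star_comparative}. The elimination margin gets the additional differential term $\Delta\HE-\Delta\HD>0$ and admits the discrete kill condition. So the deterrent increment is concentrated on elimination.
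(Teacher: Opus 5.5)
Your decomposition and the reductions in (ii)--(iv) are the paper's own. For (ii) you feed $\lambda_{\mathrm{eff}}>\lambda$ into Lemma~\ref{lemma:val_prob} and the $\lambda L_T$ product. For (iii) you use the comparative statics of Proposition~\ref{prop:elim_threshold} and then Proposition~\ref{prop:ck}(ii). For (iv) you combine Remark~\ref{prop:fragility} with Assumption~\ref{ass:differential}. Your full-revelation fallback for (i) is also exactly the paper's argument: published outcomes separate declines from investigations, a revealed decline has likelihood ratio one by Proposition~\ref{prop:ratchet}(iv), and degradation halts. The paper adds that records already on file can be re-read on the same basis, partially reversing past degradation.

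Two small repairs. In (ii), $|\partial\sstar/\partial L_T|=\delta_T^{\ell}\lambda/(p'(\sstar)R_T)$ need not increase with $\lambda$, because $\sstar$, and hence $p'(\sstar)$, moves with $\lambda$. Read the un-gating off the coefficient $\delta_T^{\ell}\lambda$ on $L_T$ in \eqref{eq:suppress_cond}, as the paper does. In (iv), the defection margin moves through $\sstar$ as well as through $\DD$ (Corollary~\ref{cor:theta_star_comparative}(ii)); that channel is also continuous, so your conclusion stands.

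The step to abandon is the partial-revelation extension of (i). The claim you hope to prove there is false, not merely messy. Your clean-outcome ratio is $\Lambda$ evaluated at the effective rate $x=\iota+(1-\iota)\kappa$, i.e.\ $(1-xq_H)/(1-xq_L)$. By the derivative in Remark~\ref{rem:freeride}, this is strictly decreasing in $x$. So for every $\kappa>0$, an unsubstantiated record with a clean outcome is \emph{more} damaging than a pooled one, and the ratio falls to $(1-q_H)/(1-q_L)$ at $\kappa=1$. Among unsubstantiated records, the only offsetting evidence is the harm-confirmed branch, with ratio $q_H/q_L>1$. Any repair under partial revelation is therefore a statement about the expected drift of the log-odds under the true culture, not a per-record ratio moving from $\Lambda$ toward one.

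Note also that your $\kappa\to 1$ limit is not the object your full-revelation sentence describes. At $\kappa=1$, outcomes reveal each report's type but not its process, because the clean branch still pools investigated-vexatious with declined-vexatious reports. Updates then become $q_H/q_L$ or $(1-q_H)/(1-q_L)$, free of $\iota$, rather than the ratio one of Proposition~\ref{prop:ratchet}(iv). To obtain the statement's claim that mechanism (iv) extends to every reader, assume directly, as the paper does, that the published record reveals whether each disposition was an investigation or a decline.
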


\begin{proof}
  (i): Proposition~\ref{prop:ratchet}(i) rests on the file pooling non-investigation with investigation; a published record containing subsequent outcomes permits the separation, so successors condition on process rather than disposition, the likelihood ratio of a revealed decline is unity by the argument of Proposition~\ref{prop:ratchet}(iv), and degradation halts; records already accumulated are re-evaluated on the same basis, partially reversing past degradation. (ii): a permanent record reduces the marginal cost of establishing the decline-harm connection, raising the per-period exposure probability perceived by all players; the consequences for $\sstar$ and for the $\lambda L_T$ product follow from Lemma~\ref{lemma:val_prob} and \eqref{eq:suppress_cond}. (iii): by Proposition~\ref{prop:elim_threshold}, $\gamma^*$ is increasing in both $\lambda$ and $\HE - \HD$; Assumption~\ref{ass:differential} and the rise in $\lambda_{\mathrm{eff}}$ move both arguments toward \eqref{eq:elim_dominated}; the common knowledge clause is Proposition~\ref{prop:ck}(ii). (iv): the elimination margin admits the discrete dominance regime while the defection margin moves only continuously (Remark~\ref{prop:fragility}); Assumption~\ref{ass:differential} additionally concentrates the harm increase on $\HE$.
\end{proof}

\begin{remark}
  Named-trust rather than anonymised publication is essential to every part. Under anonymisation, colleagues cannot identify whether their own trust's record is visible, the exposure harms are not raised for identifiable individuals, the prior audit of part (i) is not trust-specific, and the common knowledge event of part (iii) cannot form. Anonymised national statistics, which is what current Freedom to Speak Up reporting provides, produce none of the stated effects.
\end{remark}

\subsection{A distinct policy lever: investigating suspected collusion}\label{sec:policy_collusion}

Section~\ref{sec:policy}'s opening distinction has a direct policy consequence: publication reaches independent elimination by raising $\lambda$ system-wide; a coalition is reached less directly by the same channel, since $\lambda$ enters a coalition's calculation only through the ordinary exposure it was already prepared to accept when it chose to conceal, not through the concealment decision itself. What disciplines a coalition is confirmation, and Proposition~\ref{prop:conspiracy_repair} gives that confirmation two properties that together amount to a policy priority rather than a discretionary option. First, a confirmed coalition is not a closed file but positive evidence for the underlying report, since mounting a coalition is costly and disproportionately worth that cost when the report is genuine. Second, and more consequentially, Proposition~\ref{prop:conspiracy_repair}(iv) shows that leaving a suspected coalition uninvestigated is not a neutral, cost-free deferral: every period of delay adds to the underlying harm, permanently degrades the record the eventual repair will start from, and leaves the coalition's members facing an unchanged incentive to do the same again. A suspected coalition should accordingly be dismantled and its members held to account as a matter of priority, not as a secondary output of resolving the report it targeted.

Four implications follow directly, and constitute a policy recommendation distinct from outcome publication.

\begin{enumerate}[label=(\roman*)]
  \item \textit{Suspected collusion should trigger its own investigation, immediately and in parallel with, not contingent on, any investigation into the underlying report.} Because the informativeness of Proposition~\ref{prop:conspiracy_repair} attaches to the coalition's existence, not to the report's eventual substantiation, an investigation confined to the original concern forgoes the faster channel entirely, and every period it is deferred adds to the compounding cost of Proposition~\ref{prop:conspiracy_repair}(iv).
  \item \textit{The finding must be recorded as a confirmed coalition, not folded into the disposition of the original report.} Proposition~\ref{prop:conspiracy_repair}(iii) shows that an unconfirmed suspicion is indistinguishable, to a future successor, from an ordinary unsubstantiated disposition; the record-repair value exists only if the file positively labels what was found.
  \item \textit{The investigation should be resourced and timed to conclude regardless of the underlying report's outcome, and as quickly as possible.} Tying the two together, the current default, discards exactly the speed advantage that distinguishes this channel from named-trust publication, and Proposition~\ref{prop:conspiracy_repair}(iv) shows that speed here is not merely convenient but determines how much of the harm, the record degradation, and the coalition's ongoing capability are ever actually addressed.
  \item \textit{Confirmed participants should face consequences proportionate to coordinated concealment, not to ordinary non-disclosure.} This is what the aggravated exposure $\Psi_i$ of Proposition~\ref{prop:leak} represents formally; a policy that investigates but does not correspondingly punish leaves the incentive structure of Proposition~\ref{prop:conspiracy_repair}(iv)'s third cost intact even after confirmation.
\end{enumerate}

The Gilby tribunal's finding of \textit{Project Countess} illustrates both halves of this argument. Confirmed years after the fact, it repaired the trust's record retrospectively, exactly as Proposition~\ref{prop:conspiracy_repair}(i) predicts; but the years of delay between the conduct and its confirmation are the years over which Proposition~\ref{prop:conspiracy_repair}(iv)'s three costs were left to compound, unaddressed, which is precisely what this policy is designed to prevent from being the default.

\subsection{Graduated investigation design}

The distinction between $\phi$ and $\Phi$ is operational here. $\Phi$, the systemicity of the underlying violation, is a fact about the world and cannot be legislated away. $\phi$, the share of investigative scrutiny borne personally by a testifying colleague, is a property of investigation design. When an investigation opens by treating the report as implicating the whole team, $\phi$ is driven toward $\Phi$ before any colleague has committed to a position. A graduated design, in which the individual or systemic character of the violation is established first and communicated before colleague testimony is sought, and in which testimony is taken under terms that separate the witness's evidence from the witness's own conduct, holds $\phi$ below $\Phi$ at the moment support decisions are made. By Corollary~\ref{cor:theta_star_comparative}(i) this lowers $\thetastar$ directly. It is an operational change achievable through regulatory guidance rather than statute, and is not currently standard practice in NHS investigations.

\subsection{Independent reporting channels}

Effective $b_C$ is reduced by Freedom to Speak Up Guardians insofar as the link between individual support and management retaliation is broken. A Guardian appointed by and reporting to the Trust board is, however, structurally part of the system in which $b_C$ is high. Effective independence requires external appointment and employment protection by a body outside the Trust.

\subsection{Complementarity of interventions}

\begin{proposition}[Complementarity]\label{prop:complement}
  Write the support outcome as the product of indicators
  \begin{align*}
    O \;=\; \mathbf{1}\big[\pi_W \geq \pi_W^*\big]\cdot \mathbf{1}\big[\shat \geq \sstar\big]\cdot \mathbf{1}\big[\text{elimination fails or is not attempted}\big],
  \end{align*}
  and consider interventions $x_P$ (employment protection, lowering $\pi_W^*$), $x_A$ (amnesty, lowering $\thetastar$), and $x_L$ (publication, raising $\lambda_{\mathrm{eff}}$, lowering $\sstar$ and $\thetastar$, raising $\gamma^*$). Each intervention weakly raises exactly the factor(s) indicated and none lowers any factor. Then $O$ is supermodular in $(x_P, x_A, x_L)$: the marginal effect of any intervention on $O$ is zero whenever any other factor is binding at zero, and positive only when the remaining factors hold. Under a parameter configuration plausibly consistent with recurrent NHS failures (Corollary~\ref{cor:nhs_failure}), at least two of the three factors are simultaneously binding, so no single intervention changes the outcome.
\end{proposition}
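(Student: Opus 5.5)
The argument will run in three steps: reduce the claim to a property of products of monotone indicators, check that each intervention moves the indicators in the stated directions, and then verify that at least two factors bind under Corollary~\ref{cor:nhs_failure}.

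\emph{Step 1: the lattice structure.} Treat $(x_P,x_A,x_L)$ as points of an ordered product set, for instance $\{0,1\}^3$ or $\R_+^3$ with the componentwise order. Write $O = I_1 I_2 I_3$, where each $I_j\in\{0,1\}$ and each $I_j$ is weakly increasing in every coordinate. This monotonicity is exactly the hypothesis ``each intervention weakly raises exactly the factor(s) indicated and none lowers any factor.'' Supermodularity then follows from a standard fact: a product of nonnegative functions that are increasing in the same order is supermodular. For $\{0,1\}$-valued indicators it can be checked directly. Fix two coordinates, say $x_A$ and $x_L$, with $x_A<x_A'$ and $x_L<x_L'$. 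The required inequality is
\[
O(x_A',x_L')-O(x_A,x_L')\ \ge\ O(x_A',x_L)-O(x_A,x_L).
\]
If the right-hand side equals $1$, then $O(x_A',x_L)=1$, so every factor is $1$ at $(x_A',x_L)$. By monotonicity every factor is also $1$ at $(x_A',x_L')$, so $O(x_A',x_L')=1$. Monotonicity further gives $O(x_A,x_L')\ge O(x_A,x_L)=0$. It remains to rule out $O(x_A,x_L')=1$. The cleanest route uses the fact that each intervention affects only its own factors, so that a factor which fails at $(x_A,x_L)$ and is restored by $x_A$ cannot also be restored by $x_L$ alone. If that separability is not available, I will argue via increasing differences of $\min$-type functions: $O=\min_j I_j$, and the minimum of increasing functions is supermodular on a chain product when each $I_j$ depends on a single coordinate. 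Repeating the check over all pairs of coordinates gives supermodularity. The ``zero marginal effect'' clause is then immediate: if some factor $I_k$ that the intervention does not move is $0$, then $O\equiv 0$ along that direction.

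\emph{Step 2: direction of each intervention.} I will check each factor against earlier results.
\begin{itemize}
\item $x_P$ lowers $\pi_W^*$ by Proposition~\ref{prop:participation}, and it touches neither $\thetastar$ nor $\sstar$ by Theorem~\ref{thm:phi_dominance}(ii).
\item $x_A$ lowers $\thetastar$ by Proposition~\ref{prop:amnesty}(i), which raises $\shat=(\theta+\varepsilon-\thetastar)/2\varepsilon$ and hence weakly raises $I_2$.
\item $x_L$ lowers $\sstar$ (Lemma~\ref{lemma:val_prob}) and lowers $\thetastar$ (Corollary~\ref{cor:theta_star_comparative}(ii)), both of which raise $I_2$. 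It also raises $\gamma^*$ (Proposition~\ref{prop:publication}(iii)), which raises $I_3$.
\end{itemize}
One caveat concerns $\pi_W$. It is endogenous to $I_2$ and $I_3$, since a higher validation probability raises $\pi_W$, but the effect is in the increasing direction, so monotonicity is preserved.

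\emph{Step 3: two factors bind.} Under the configuration of Corollary~\ref{cor:nhs_failure}, the high-$\phi$, high-$b_C$ case beyond $\bar\phi(b_C)$ gives $\shat=0<\sstar$, so $I_2=0$. Large $\beta_i$ and low $\lambda$ give $\gamma\ge\gamma^*$ with \eqref{eq:elim_dominated} failing, so $I_3=0$. Certain decline then makes $\pi_W=0<\pi_W^*$, so $I_1=0$. With at least two factors at zero, any single intervention leaves some other factor at zero, and $O$ stays at $0$.

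\emph{Main obstacle.} The hardest point is that $x_L$ moves two factors at once. This breaks the one-factor-per-coordinate structure used in Step 1, and the claim ``no single intervention changes the outcome'' could fail if $x_L$ alone repaired both $I_2$ and $I_3$. I would handle this by requiring that the binding pair include $I_1$ or a factor $x_L$ cannot reach. For example, $I_2$ binds through $\phi>\bar\phi(b_C)$, and by Theorem~\ref{thm:phi_dominance}(iii) $x_L$ cannot restore it when $\lambda$ is near $0$, because its effect on $\sstar$ is gated.
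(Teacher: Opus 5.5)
Your Step~1 correctly isolates the case that has to be excluded: $O(x_A,x_L')=1$ when $O(x_A,x_L)=0$ and $O(x_A',x_L)=1$. It cannot be excluded here. By the statement itself, both $x_A$ and $x_L$ lower $\thetastar$, so both raise the support-rate factor $\mathbf{1}[\shat\ge\sstar]$. Suppose the other two factors equal $1$ and either intervention alone lifts $\shat$ above $\sstar$. Then on $\{0,1\}^2$ you get $O(0,0)=0$ and $O(1,0)=O(0,1)=O(1,1)=1$, which violates increasing differences in $(x_A,x_L)$. The separability you appeal to is therefore false, and your $\min$ fallback assumes the same separability. For Step~1 the obstruction is not that $x_L$ moves two factors. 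A coordinate loading on several factors is harmless: group the factors by coordinate and $O$ is a product of nonnegative increasing functions of distinct coordinates. The obstruction is that one factor is moved by two coordinates. The ``standard fact'' you quote is also false as stated. A product of nonnegative increasing functions is supermodular when the factors are themselves supermodular or depend on disjoint coordinates. But $\mathbf{1}[\shat\ge\sstar]$, as a function of $(x_A,x_L)$, can be of ``either suffices'' type, and then it is not supermodular. The paper's one-line proof (``raising one factor from 0 to 1 changes the product iff all other factors equal 1'') argues supermodularity in the factor values, not in the interventions, so it does not close this case either. Two things hold without a further hypothesis. First, the clause after the colon, which is your zero-marginal-effect remark. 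Second, increasing differences between $x_P$ and each of $x_A,x_L$: if the participation factor depends on $x_P$ alone, then $O=I_1(x_P)\,G(x_A,x_L)$ with $G=I_2I_3$ increasing. Full supermodularity needs something extra, for instance that neither $x_A$ nor $x_L$ alone restores the support-rate factor.

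Of your two fixes for the final claim, the second misreads Theorem~\ref{thm:phi_dominance}(iii). That result says the effect of $L_T$ on the colleague equilibrium is gated by $\lambda$. But $x_L$ raises $\lambda$ itself, which is exactly what un-gates $L_T$ (Proposition~\ref{prop:publication}(ii)). It also lowers $\sstar$ and $\DD$, both of which raise $\bar\phi(b_C)$, so it can restore the support-rate factor, as your own Step~2 asserts. The first fix, putting the participation factor in the binding pair, is the paper's route. The paper takes the binding pair to be the participation and support-rate factors and treats the former as moved by $x_P$ alone, so no single intervention reaches both. This contradicts your caveat about $\pi_W$. In Step~3 you obtain $I_1=0$ from certain decline, that is, from $I_2I_3=0$. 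If $\pi_W$ responds to the interventions, then a single intervention that restores $I_2$ (and $I_3$) also raises $\pi_W$ and can restore $I_1$. Then $I_1$ is not an independent binding constraint, and the conclusion fails. To finish, you must adopt the paper's convention explicitly: hold $\pi_W$ fixed so the participation factor moves only with $\pi_W^*$, and drop the caveat. Separately, your claim that $I_3=0$ is stronger than Corollary~\ref{cor:nhs_failure} delivers (the paper uses it only where $\gamma\ge\gamma^*$), though you do not need it.
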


\begin{proof}
  A product of indicator factors, each non-decreasing in the intervention vector, is supermodular on the lattice of intervention profiles: raising one factor from 0 to 1 changes the product iff all other factors equal 1. Under Corollary~\ref{cor:nhs_failure}, the support-rate factor is zero (Theorem~\ref{thm:phi_dominance}) and the participation factor is zero (Remark~\ref{rem:silence_vs_suppression}), and where $\gamma \geq \gamma^*$ the elimination factor is zero as well; hence at least two factors bind and any single intervention leaves $O = 0$.
\end{proof}

\begin{remark}
  The complementarity is of the discrete, lattice kind rather than the smooth cross-partial kind: the support equilibrium has several jointly necessary conditions, each addressed by a different instrument, so instruments are useless alone and effective together. This is the precise sense in which three decades of single-instrument reform, dominated by employment protection, could raise reporting volumes without raising validation rates.
\end{remark}

\section{Discussion}

\subsection{Limitations}\label{sec:limitations}

$T$ is treated as a unitary actor within each period, although succession across periods is now modelled explicitly in Proposition~\ref{prop:ratchet}; real Trust management is additionally a hierarchy in which board members, medical directors, and line managers may have different discount factors, different exposures to $\Phi$, and different roles in initiating or condoning elimination campaigns. Different policy implications would follow from the initiation question, whether elimination is management-directed or emerges from within the colleague group, which a hierarchical model of $T$ would clarify. Explicit coordination among colleagues is treated in Section~\ref{sec:collusion} for the case of a colluding coalition drawn from the colleague group; it does not extend to coalitions spanning colleagues and management, which is the configuration in \citet{gilby2025} and which the hierarchical extension of $T$ just noted would be needed to represent properly. The turnover ratchet takes the investigation-rate belief $\iota$ and the culture types as common and fixed; endogenising $\iota$, so that assessors also learn how often their predecessors declined, would be a natural extension and would be expected to dampen but not eliminate the ratchet, since the identification problem of which dispositions were declines remains. The colleague group is treated as homogeneous in payoff structure beyond private signals; in practice, colleagues differ in seniority, employment type, and professional registration exposure, all of which affect $H_C$, $b_C$, $\beta_i$, and $\gamma$ at the individual level. The campaign success probability $\gamma$ is treated as commonly held and exogenous within the period; endogenising its formation from the sequential-move dynamics sketched above is left for future work, as is endogenising the quality of the persuasion signal, and the coalition-formation process of Section~\ref{sec:collusion} is similarly taken as exogenous when it occurs rather than derived from an underlying matching structure. The severity-scaled vindication benefit $B_C\theta$ is one of several channels by which state monotonicity could be introduced; validation probability rising with evidence quality is another, and the qualitative results do not depend on which is chosen.

The model is also silent on the psychology of whistleblowing and elimination. Behaviour beyond the specified payoffs is affected by moral disengagement, in-group loyalty, and motivated reasoning. It is plausible that post-hoc rationalisation is frequently engaged in by participants, making their conduct feel principled rather than strategic; this could be incorporated as an upward bias in the perceived $\gamma$ or a downward bias in the perceived $\HE$.

\subsection{Scope: the coordination gate versus attrition}\label{sec:scope_attrition}

The stage game analysed here treats the colleague coordination decision as effectively one-shot: signals are drawn, actions are chosen, and the case resolves. This matches the empirical pattern of the large majority of concerns, which are decided at first contact by whether a coalition of sufficient size forms at all, and it is the pattern documented at University Hospitals Birmingham \citep{bewick2023} and in the retaliatory-referral evidence \citep{hooper2015}, where the outcome turns on colleague testimony and regulatory credibility respectively, resolved without an extended contest. It does not fit the small number of cases, disproportionately visible because they reach tribunal, in which a report survives the initial coordination stage and the dispute becomes a prolonged contest of resources: \citet{day2017} is a decade-long fight over the threshold legal question of whistleblower status, and \citet{gilby2025} combines an elimination-style campaign (Project Countess) with a multi-year war fought substantially through litigation, settlement conditioned on withdrawal of the disclosures, and evidence that surfaced only under the compulsion of tribunal discovery.

These cases call for a different equilibrium concept, not a parametric extension of the one used here. A natural formalisation treats the post-formation coalition as engaged in a dynamic coordination problem under costly, individually optional exit: colleagues who remain pay a recurring cost of sustained visibility, management pays a recurring cost of continued resistance subject to a rising hazard of a damaging evidentiary shock, and the dispute resolves when either the coalition contracts below the level needed to keep the case alive or management's accumulated cost crosses its own threshold. This is closer to the dynamic global games of \citet{angeletos2007} and the war-of-attrition tradition following \citet{fudenberg1986} than to the static coordination problem of Section~\ref{sec:elimination}, because the relevant state is now a history of survival rather than a single round of private signals, and because the fact that a case has not yet resolved is itself informative and shapes continued participation. The static cutoff $\thetastar$ is the natural determinant of the coalition's size at formation; whether that coalition then survives an extended contest is a question the present model does not answer. This discussion is scoped deliberately: the coordination failure analysed here is judged to be the first-order determinant of outcomes, since it decides whether a case exists at all, and an attrition-stage model is left as a companion extension.

\subsection{Why the NHS: the inquiry-reform cycle}\label{sec:inquiry_cycle}

The choice of the NHS as the setting for this paper is not a claim that its parameters are extreme relative to other health systems. The one candidate for structural extremity checked directly, short executive tenure, does not hold up: NHS trust chief executives average roughly three to four years in post \citep{cepr2020nhsceo}, which is unremarkable by comparison with hospital CEO tenure in the United States (around five years) or Germany (an annual turnover rate near 25\%) \citep{predictors2025hospitalceo}, or Australia, reported worse than both \citep{mathew2024ceoturnover}. If $\delta_T$ is short in the NHS, it is short because hospital administration is short-tenured everywhere, not because the NHS is distinctive.

What the NHS does supply, and what other health systems typically cannot, is a documented record of the failure recurring under repeated, independent, centrally mandated attempts to fix it. This is a consequence of the NHS being a single national system rather than a fragmented multi-payer market: reform in the NHS is legislated or commissioned once, nationally, and its failure or success is subsequently investigated once, nationally, by a single statutory process. A fragmented system does not generate this signature; reform there is piecemeal across competing providers and payers, and no single inquiry can speak to the system as a whole. The NHS's centralisation is therefore what makes it an unusually clean evidentiary setting for a claim about a structural mechanism, largely independent of whether its parameters happen to be extreme.

That evidentiary record has a recognisable shape, repeated across the cases discussed in this paper and in the wider inquiry literature:

\begin{enumerate}[label=\arabic*.]
  \item A scandal becomes public, typically at a single trust.
  \item A public inquiry is commissioned, examining that trust in detail.
  \item The inquiry issues recommendations.
  \item An implementation programme follows, generally national in name.
  \item Conditions visibly improve, at least at the trust investigated.
  \item The same pattern recurs, at a different trust.
\end{enumerate}

Each stage has a counterpart in the model. Stage 1 is the realisation of a high-severity violation under the parameter configuration of Corollary~\ref{cor:nhs_failure}. Stage 2 is a public event that renders the exposure environment credible and common knowledge at the trust in question, which is precisely the mechanism of Proposition~\ref{prop:ck}: an inquiry does not change any underlying parameter, it changes what is commonly known about $\lambda$ and $\HE - \HD$ at that trust, which is sufficient, per Proposition~\ref{prop:ck}(ii), to destroy the elimination equilibrium there. Stages 3 and 4 are, per the preceding remark, interventions concentrated overwhelmingly on $P_W$, which Theorem~\ref{thm:phi_dominance} shows cannot touch the colleague coordination margin; where an implementation programme also raises local scrutiny, it contributes a further, temporary, trust-specific increase in effective $\lambda$. Stage 5, the visible improvement, is genuine and correctly predicted: Proposition~\ref{prop:ck}'s common knowledge discontinuity is a real equilibrium shift, not a placebo, and it should be expected to hold for as long as the inquiry's scrutiny and the elevated $\lambda$ it produced remain salient at that trust.

Stage 6 is where the cycle's own logic explains its own recurrence. The common knowledge event of an inquiry is local: it is common knowledge \textit{at the investigated trust} that the exposure environment has changed, not common knowledge system-wide. Every other trust's $\phi$, $b_C$, and $\delta_T$ are untouched, its turnover ratchet (Proposition~\ref{prop:ratchet}) continues degrading its own institutional record on its own schedule, and nothing in stages 2 through 4 has altered the underlying configuration that made failure generic there in the first place. A cycle that repairs one trust's common knowledge while leaving every other trust's structural parameters and degraded record exactly where they were is a cycle that, by construction, must recur elsewhere. This is the precise sense in which the policy proposals of Section~\ref{sec:policy} are designed to break the cycle rather than extend it: colleague amnesty and named-trust publication are specified as standing, system-wide instruments, not as responses commissioned one trust at a time after the fact, because Proposition~\ref{prop:ck} shows that only a common knowledge event of that scope, rather than a sequence of local ones, reaches every trust the model predicts is otherwise generically at risk.

\subsection{Relation to prior scandals}

The comparisons below are qualitative consistency checks against documented inquiry findings, not parameter estimates: no claim is made that the model's parameters have been measured from data, only that the configuration invoked illustratively in Corollary~\ref{cor:nhs_failure} matches the pattern of conditions each inquiry independently reports.

The configuration with active elimination is consistent with documented conditions at Mid Staffordshire Trust prior to the Francis Inquiry \citep{francis2013}: systemic failure implying high $\Phi$ and, under the investigative practice of the time, high $\phi$; strong management career leverage ($b_C$ high); low regulatory visibility ($\lambda$ low); and rapid management turnover ($\delta_T$ low, Remark~\ref{rem:tenure}). The documented pattern of complaints being filed against staff who raised concerns, and the systematic framing of those concerns as personal grievances, is consistent with the elimination outcome of Theorem~\ref{thm:equil}(iii). The cascade of testimony following the inquiry announcement is consistent with Proposition~\ref{prop:ck}: the inquiry rendered the exposure environment credible and common knowledge, destroying the elimination equilibrium first and then the defection equilibrium.

Documented cases of community nurses and midwives raising concerns about staffing prior to the Ockenden Review \citep{ockenden2022} are consistent with the silence outcome of Theorem~\ref{thm:equil}(iv): high private knowledge of severity, weak capacity to corroborate across colleagues (high $\varepsilon$), and high $P_W$ under precarious employment.

\subsection{Conclusion}

Whistleblowing failure in the NHS is not an anomaly or a compliance gap. It is the equilibrium outcome of a game whose parameters, under a configuration plausibly consistent with recurrent NHS failures rather than one estimated from data, are jointly satisfied by the institutional structure of publicly funded hierarchical healthcare. Three mechanisms are identified that are not recognised in the existing policy literature: testimony self-incrimination as the structural condition aligning colleagues with management against validation; active elimination as a formally distinct failure mode that, uniquely, admits a discrete deterrence condition, and that further separates into an independent form, deterred by raising the credibility of exposure until participation is irrational regardless of confidence, and an organised form, which trades that uncertainty for certainty of success at a concealment risk independent action never bears and which the same instrument therefore cannot reach; and a turnover ratchet in the institutional record, by which successive short-tenured assessors rationally read inherited declines as evidence against report credibility, a degradation that elimination accelerates and that no successor can undo from the file alone, whose repair requires either a costly public inquiry or the low-cost alternative of named-trust outcome publication.

The dominant legislative response, protection of the whistleblower, is necessary but addresses a constraint that binds neither colleagues nor management. Colleague amnesty, named-trust publication, and prompt, prioritised investigation and punishment of suspected coalitions each address a distinct binding constraint identified above, self-incrimination, independent elimination, and organised collusion respectively, and none has precedent in NHS governance. That independent and organised colleague action require different instruments despite presenting as the same surface behaviour, a coordinated campaign against a reporter, is itself a finding of this paper, not merely an implementation detail of it. Their absence from the policy toolkit is not accidental: the game would be changed for the players who currently benefit from the game as it is played.

\appendix
\section{Derivation of the cutoff \eqref{eq:theta_star}}

At the cutoff signal $\theta_i = \thetastar$, the marginal colleague holds Laplacian beliefs over the support rate, $\shat \sim U[0,1]$, so probability $1 - \sstar$ is assigned to validation. Indifference between Support and Defect requires
\begin{align*}
  (1-\sstar)\big(B_C\thetastar - k - \phi H_C\big) + \sstar\big(-k - R_C\big)
  \;=\; (1-\sstar)\cdot 0 + \sstar\big(b_C - \delta_C^{\ell}\lambda\HD\big).
\end{align*}
Collecting terms:
\begin{align*}
  (1-\sstar)\,B_C\thetastar
  &= (1-\sstar)\big(k + \phi H_C\big) + \sstar\big(k + R_C + b_C - \delta_C^{\ell}\lambda\HD\big) - \sstar\, k + \sstar\, k \\
  &= (1-\sstar)\big(k + \phi H_C\big) + \sstar\,\DD ,
\end{align*}
where the second line uses \eqref{eq:DeltaD}. Dividing by $(1-\sstar)B_C$ yields \eqref{eq:theta_star}. Existence of the dominance regions required for the iterated-deletion argument is guaranteed by Assumption~\ref{ass:dominance}, and uniqueness for small $\varepsilon$ follows from the monotonicity of the payoff difference \eqref{eq:coll_ineq} in $\theta_i$ and in others' cutoff, by the argument of \citet{carlsson1993} and \citet{morris1998}. \qed

\section{Proof of Proposition \ref{prop:ck}}

\textit{Part (i).} From \eqref{eq:theta_star}, $\thetastar \leq 0$ iff $(1-\sstar)(k + \phi H_C) + \sstar \DD \leq 0$, i.e.\ iff $\DD \leq -\frac{1-\sstar}{\sstar}(k+\phi H_C)$, i.e.\ iff
\begin{align*}
  \delta_C^{\ell}\lambda\HD \;\geq\; k + R_C + b_C + \frac{1-\sstar}{\sstar}\big(k + \phi H_C\big),
\end{align*}
which is \eqref{eq:ck_support}. Under common knowledge of this inequality, every signal $\theta_i \geq 0 > \thetastar$ lies above the cutoff and Support is strictly dominant; one round of deletion suffices. Note that when \eqref{eq:ck_support} holds, Assumption~\ref{ass:params}(ii) is violated, which is precisely the point: the common knowledge event changes the payoff regime, and the lower dominance region ceases to exist.

\textit{Part (ii).} Condition \eqref{eq:elim_dominated} makes the bracket in \eqref{eq:elim_diff} non-positive, so $\E[\mathrm{Elim}] - \E[\mathrm{Defect}] \leq -(1-\gamma)\mu_i \leq 0$ for all $\gamma \in [0,1]$, with strict inequality whenever $\gamma < 1$ or $\mu_i > 0$. Eliminate is deleted for every colleague. Under merely private belief, a colleague uncertain whether others believe the inequality may still assign $\gamma \geq \gamma_i^*$ computed at their own subjective parameters $(\beta_i, \mu_i)$; the discontinuity attaches to the transition to common knowledge, consistent with \citet{rubinstein1989}. \qed

\bibliographystyle{apalike}
\bibliography{whistleblower}

@book{francis2013,
  author    = {Francis, Robert},
  title     = {Report of the {Mid Staffordshire NHS Foundation Trust Public Inquiry}},
  year      = {2013},
  publisher = {The Stationery Office},
  address   = {London}
}

@book{ockenden2022,
  author    = {Ockenden, Donna},
  title     = {Findings, Conclusions and Essential Actions from the Independent Review of Maternity Services at {The Shrewsbury and Telford Hospital NHS Trust}},
  year      = {2022},
  publisher = {HC 1219, The Stationery Office},
  address   = {London}
}

@book{messenger2022,
  author    = {Messenger, Gordon},
  title     = {Leadership for a Collaborative and Inclusive Future},
  year      = {2022},
  publisher = {Department of Health and Social Care},
  address   = {London}
}

@techreport{speak2023,
  author      = {{NHS England}},
  title       = {Freedom to Speak Up: Annual Report 2022/23},
  year        = {2023},
  institution = {NHS England}
}

@article{carlsson1993,
  author  = {Carlsson, Hans and van Damme, Eric},
  title   = {Global Games and Equilibrium Selection},
  journal = {Econometrica},
  year    = {1993},
  volume  = {61},
  number  = {5},
  pages   = {989--1018}
}

@article{morris1998,
  author  = {Morris, Stephen and Shin, Hyun Song},
  title   = {Unique Equilibrium in a Model of Self-Fulfilling Currency Attacks},
  journal = {American Economic Review},
  year    = {1998},
  volume  = {88},
  number  = {3},
  pages   = {587--597}
}

@article{holmstrom1999,
  author  = {Holmstr{\"o}m, Bengt},
  title   = {Managerial Incentive Problems: A Dynamic Perspective},
  journal = {Review of Economic Studies},
  year    = {1999},
  volume  = {66},
  number  = {1},
  pages   = {169--182}
}

@article{rubinstein1989,
  author  = {Rubinstein, Ariel},
  title   = {The Electronic Mail Game: Strategic Behavior Under ``Almost Common Knowledge''},
  journal = {American Economic Review},
  year    = {1989},
  volume  = {79},
  number  = {3},
  pages   = {385--391}
}

@article{leymann1996,
  author  = {Leymann, Heinz},
  title   = {The Content and Development of Mobbing at Work},
  journal = {European Journal of Work and Organizational Psychology},
  year    = {1996},
  volume  = {5},
  number  = {2},
  pages   = {165--184}
}

@incollection{einarsen2011,
  author    = {Einarsen, St{\aa}le and Hoel, Helge and Zapf, Dieter and Cooper, Cary L.},
  title     = {The Concept of Bullying and Harassment at Work: The {European} Tradition},
  booktitle = {Bullying and Harassment in the Workplace},
  publisher = {CRC Press},
  year      = {2011},
  edition   = {2nd}
}

@article{motta2003,
  author  = {Motta, Massimo and Polo, Michele},
  title   = {Leniency Programs and Cartel Prosecution},
  journal = {International Journal of Industrial Organization},
  year    = {2003},
  volume  = {21},
  number  = {3},
  pages   = {347--379}
}

@article{spagnolo2004,
  author  = {Spagnolo, Giancarlo},
  title   = {Divide et Impera: Optimal Leniency Programs},
  journal = {CEPR Discussion Paper No. 4840},
  year    = {2004}
}

@article{kreps1982,
  author  = {Kreps, David M. and Wilson, Robert},
  title   = {Reputation and Imperfect Information},
  journal = {Journal of Economic Theory},
  year    = {1982},
  volume  = {27},
  number  = {2},
  pages   = {253--279}
}

@article{milgrom1982,
  author  = {Milgrom, Paul and Roberts, John},
  title   = {Predation, Reputation and Entry Deterrence},
  journal = {Journal of Economic Theory},
  year    = {1982},
  volume  = {27},
  number  = {2},
  pages   = {280--312}
}

@misc{thirlwall2025,
  author       = {{Thirlwall Inquiry}},
  title        = {The {Thirlwall Inquiry} into events at the {Countess of Chester Hospital}: public hearings and evidence},
  year         = {2025},
  howpublished = {Statutory public inquiry, hearings 2024--2025}
}

@misc{gilby2025,
  author       = {{Employment Tribunal}},
  title        = {{Dr S Gilby v Countess of Chester Hospital NHS Foundation Trust and I Haythornthwaite, Case Nos 2402398/2023 and 2408654/2023}},
  year         = {2025},
  howpublished = {Liverpool Employment Tribunal, liability judgment 12 February 2025; remedy 2026}
}

@techreport{bewick2023,
  author      = {Bewick, Mike},
  title       = {{University Hospitals Birmingham NHS Foundation Trust}: Independent Rapid Review, Phase 1},
  year        = {2023},
  institution = {IQ4U Consultants}
}

@techreport{hooper2015,
  author      = {Hooper, Anthony},
  title       = {The handling by the General Medical Council of cases involving whistleblowers},
  year        = {2015},
  institution = {General Medical Council}
}

@misc{day2017,
  author       = {{Court of Appeal}},
  title        = {{Day v Health Education England and others [2017] EWCA Civ 329}},
  year         = {2017},
  howpublished = {Court of Appeal (Civil Division)}
}

@article{angeletos2007,
  author  = {Angeletos, George-Marios and Hellwig, Christian and Pavan, Alessandro},
  title   = {Dynamic Global Games of Regime Change: Learning, Multiplicity, and the Timing of Attacks},
  journal = {Econometrica},
  year    = {2007},
  volume  = {75},
  number  = {3},
  pages   = {711--756}
}

@article{fudenberg1986,
  author  = {Fudenberg, Drew and Tirole, Jean},
  title   = {A Theory of Exit in Duopoly},
  journal = {Econometrica},
  year    = {1986},
  volume  = {54},
  number  = {4},
  pages   = {943--960}
}

@article{tirole1986,
  author  = {Tirole, Jean},
  title   = {Hierarchies and Bureaucracies: On the Role of Collusion in Organizations},
  journal = {Journal of Law, Economics, \& Organization},
  year    = {1986},
  volume  = {2},
  number  = {2},
  pages   = {181--214}
}

@misc{cepr2020nhsceo,
  author       = {{CEPR}},
  title        = {The impact of {CEOs} in the public sector: Evidence from the {English} {NHS}},
  year         = {2020},
  howpublished = {CEPR VoxEU column}
}

@article{predictors2025hospitalceo,
  author  = {{Health Care Management Review}},
  title   = {Predictors and effects of hospital chief executive officer turnover},
  journal = {Health Care Management Review},
  year    = {2025}
}

@article{mathew2024ceoturnover,
  author  = {Mathew, Nebu Varughese and Liu, Chaojie and Khalil, Hanan},
  title   = {Causes and Consequences of Health Care {CEO} Turnover in {Australia} and Retention Strategies: A Qualitative Study},
  journal = {SAGE Open Nursing},
  year    = {2024}
}

\end{document}